\documentclass[12pt,a4paper]{article}

\usepackage{geometry}
\usepackage{authblk}
\usepackage{mathtools}
\usepackage[utf8]{inputenc} 
\usepackage{amsmath, amsthm, amsfonts, amssymb}
\usepackage{graphicx}
\usepackage{array, longtable}

\usepackage[table]{xcolor}
\usepackage{tikz}
\usetikzlibrary{arrows.meta, positioning}
\usepackage{float, makecell, multirow}
\usepackage{orcidlink}
\usepackage[authoryear,round]{natbib}
\usepackage{enumitem}
\usepackage{caption}
\usepackage{algorithm}
\usepackage{algpseudocode}
\usepackage{booktabs}
\usepackage{hyperref}
\usepackage{appendix}

\newtheorem{theorem}{Theorem}[section]
\newtheorem{lemma}{Lemma}[section]

\bibpunct{[}{]}{,}{n}{}{;}

\title{On WAIC for Dependent Data: A Covariance-Corrected Framework with Linear-Time Complexity}
\author{Safaa K. Kadhem \thanks{Corresponding author: safaa.kadhem@mu.edu.iq} \orcidlink{0000-0002-5287-5381}}
\affil[1]{Department of Mathematics and Computer Applications, College of Science, Al-Muthanna University, Iraq}

\begin{document}
	\maketitle

	\begin{abstract}
		The Widely Applicable Information Criterion (WAIC) is a cornerstone of Bayesian model selection, but its conditional independence assumption renders it inappropriate for sequential and spatially correlated data—a limitation that leads to systematically underestimated model complexity and over-optimistic predictive assessments. We introduce CC-WAIC, a principled generalization of WAIC that explicitly incorporates the full posterior covariance structure of log-likelihood contributions. CC-WAIC reduces exactly to WAIC when independence holds, making it a natural extension rather than an ad-hoc modification. To overcome the prohibitive \(\mathcal{O}(n^2)\) computational cost of the full covariance matrix, we develop a linear-time implementation using a banded covariance approximation that reduces complexity to \(\mathcal{O}(nK)\), with theoretical guarantees on the approximation error. We further introduce an effective sample size correction to mitigate finite-sample bias in MCMC estimation. Through extensive simulations on Hidden Markov Models and real-world applications to Old Faithful geyser data and S\&P 500 volatility modelling, we demonstrate that CC-WAIC substantially outperforms WAIC, LOO-CV, iWAIC, and WAIC\(_{\text{NF}}\), particularly under strong temporal dependence and limited sample sizes. The proposed criterion offers a computationally scalable and theoretically grounded tool for Bayesian model selection in the dependent data settings that are ubiquitous across modern science. Limitations include reliance on exponential mixing and exact conditional likelihoods; extensions to long-memory processes and approximate inference are discussed.
		\\
		\textbf{Keywords:} Bayesian model selection; Covariance-Corrected WAIC; Linear-time algorithm; Banded covariance; Short-range dependence; Sequential data models; MCMC diagnostics
	\end{abstract}

	\section{Introduction}
	\label{sec:intro}
	
	Bayesian model selection is a cornerstone of modern statistical inference, providing a principled framework for balancing model fit against complexity to optimise predictive performance. Classical information criteria such as Akaike's Information Criterion (AIC) \citep{akaike1974aic}, the Bayesian Information Criterion (BIC) \citep{schwarz1978bic}, and the Deviance Information Criterion (DIC) \citep{spiegelhalter2002dic} have been widely adopted due to their simplicity and computational efficiency. However, these methods rely on asymptotic approximations—such as Laplace expansions and plug-in estimates—that can break down under model misspecification, weak identifiability, or complex multimodal posterior geometries.
	
	A recurring and increasingly important challenge in statistical practice is model selection when data exhibit dependence structures, whether temporal, spatial, or spatio-temporal. Examples range from macroeconomic time series and financial volatility to longitudinal biomedical studies and environmental monitoring. While classical criteria have been adapted to such settings—for instance, through focused information criteria for time series \citep{Claeskens2006} or residual-based complexity selection \citep{Ellner2002}—these adaptations remain largely confined to goodness-of-fit measures that assume partial independence or rely on heuristic corrections. Crucially, they do not directly address the full predictive distribution within a Bayesian framework, limiting their ability to evaluate true out-of-sample predictive performance in complex dependent models.
	
	The Widely Applicable Information Criterion (WAIC) \citep{watanabe2010asymptotic} represented a significant advance, offering a fully Bayesian alternative that integrates over the entire posterior distribution rather than relying on point estimates. WAIC is asymptotically equivalent to Bayesian leave-one-out cross-validation (LOO-CV) \citep{vehtari2017practical}, making it attractive for general-purpose model evaluation. Yet both WAIC and LOO-CV rest on a critical assumption: conditional independence of observations given the model parameters. This assumption is routinely violated in sequential data—from econometric time series \citep{hamilton1994time,tsay2010analysis} and state-space models \citep{rabiner1989tutorial} to spatial and spatio-temporal applications \citep{lindgren2015spde}. The consequences are severe: ignoring dependence systematically underestimates model complexity and yields over-optimistic assessments of predictive accuracy \citep{burkner2020improving}. Moreover, LOO-CV becomes fundamentally unstable for dependent data, as removing single observations disrupts the correlation structure and produces unreliable predictive contrasts \citep{yao2018stacking}.
	
	Several extensions have been proposed to adapt WAIC to specific dependence structures. For spatial data, \citet{Chan2024} introduced WAIC\(_{\text{NF}}\), reformulating the penalty term using non-factorisable likelihoods. \citet{Li2015} proposed iWAIC, which integrates over latent variables associated with the held-out unit. While iWAIC is conceptually appealing for spatial models, its application to temporal data is problematic: the regeneration of latent states can inadvertently leverage future information, violating temporal causality and breaking the sequential structure of the data. Other contributions have distinguished conditional from marginal WAIC for hierarchical settings \citep{Merkle2019}, or extended WAIC to handle missing data \citep{Hwang2016} and N-mixture models \citep{Gaya2024}. Despite their value, these approaches are typically tailored to specific model classes, impose substantial computational burdens, or rely on heuristic adjustments. None offers a general, theoretically justified correction for the full covariance structure induced by temporal dependence. To our knowledge, CC-WAIC provides the first general covariance correction that applies to any sequential model with tractable conditional likelihoods, while maintaining computational efficiency through a linear-time implementation.
	
	The theoretical foundation for such a correction was recently established in a companion paper \citep{kadhem2025ccwaic}, which introduced the Covariance-Corrected WAIC (CC-WAIC). CC-WAIC provides a unified framework that explicitly incorporates the posterior covariance structure of log-likelihood contributions across observations. Unlike standard WAIC, which ignores covariance terms, CC-WAIC adjusts the effective number of parameters to reflect the true information content of dependent data, all without requiring repeated model refits. The framework is theoretically general: it applies to any sequential model for which the conditional likelihood \(p(y_t \mid y_{<t}, \theta)\) can be evaluated. However, the original formulation remains computationally prohibitive for large datasets, as the full covariance matrix requires \(\mathcal{O}(n^2)\) memory and \(\mathcal{O}(n^3)\) operations.
	
	This computational bottleneck is the central focus of the present work. We propose a \emph{linear-time implementation} of CC-WAIC using a banded covariance approximation that exploits the decaying dependence structure typical of sequential data—a property formalised through standard mixing conditions. This approximation reduces the complexity to \(\mathcal{O}(nK)\), where \(K\) is a small, data-driven bandwidth, and is accompanied by rigorous theoretical guarantees on the approximation error. We further introduce an effective sample size (ESS) correction to mitigate finite-sample bias in MCMC estimation, a practical concern often overlooked in the literature. Recent developments such as the Widely Applicable Bayesian Information Criterion (WBIC) \citep{watanabe2013wbic}, while valuable for singular models, target marginal likelihood rather than predictive performance and still assume independence. CC-WAIC, by contrast, is specifically designed for dependent data, making it a more principled and practical tool for Bayesian model selection in sequential contexts.
	
	The contributions of this work are fivefold:
	\begin{enumerate}
		\item We formally derive CC-WAIC and establish its asymptotic equivalence to block-based leave-future-out cross-validation for stationary mixing processes.
		\item We introduce a practical ESS correction that reduces finite-sample bias in MCMC-based estimation of the penalty term.
		\item We develop a linear-time algorithm using a banded covariance approximation, reducing computational complexity from \(\mathcal{O}(n^2)\) to \(\mathcal{O}(nK)\), with \(K\) chosen adaptively from the data.
		\item We provide theoretical guarantees on the truncation error and establish that the linear-time CC-WAIC retains the asymptotic properties of the full version.
		\item We validate the method empirically on Hidden Markov Models—a canonical class of sequential models—demonstrating improved selection accuracy and dramatic computational gains across a range of dependence strengths and sample sizes.
	\end{enumerate}
	
	\noindent It is important to acknowledge the scope and limitations of the current framework. The theoretical guarantees established in this work rely on two key assumptions. First, the data-generating process is assumed to satisfy a strong mixing condition with exponential decay (Assumption~A3 in Appendix~\ref{app:regularity-conditions}). This condition is satisfied by many practical models, including stationary ARMA processes and finite-state HMMs, but it explicitly excludes long-memory processes such as fractional ARIMA or certain stochastic volatility models. Second, the criterion requires the conditional likelihood \(p(y_t \mid y_{<t}, \theta)\) to be computable exactly; for models where this is intractable—for example, nonlinear state-space models requiring particle filtering—approximation errors may affect performance. We discuss these limitations in detail in Section~\ref{sec:discussion} and outline promising directions for future research, including extensions to long-memory settings and approximate inference frameworks.
	
	The remainder of this paper is organised as follows. Section~\ref{sec:methodology} presents the methodological framework, including the derivation of CC-WAIC, the banded covariance approximation, and the linear-time algorithm. Section~\ref{sec:simulation} reports on extensive simulation studies under varying dependence structures and sample sizes. Section~\ref{sec:realdata} applies the proposed criterion to two real-world sequential datasets: the Old Faithful geyser waiting times and financial volatility data. Section~\ref{sec:discussion} concludes with a summary of findings, a discussion of limitations, and directions for future research. Technical proofs and additional computational details are provided in the Appendix.

	\section{Methodology}
	\label{sec:methodology}
	
	\subsection{Theoretical Limitations of Standard WAIC for Dependent Data}
	
	The Widely Applicable Information Criterion (WAIC) \citep{watanabe2010asymptotic} represents a fundamental advancement in Bayesian model selection due to its fully Bayesian nature and asymptotic equivalence with Bayesian cross-validation. The standard WAIC formulation, however, relies critically on the assumption of conditional independence of observations given model parameters. Formally, for observed data $\mathbf{y} = (y_1, \ldots, y_n)$ and parameters $\theta$, WAIC assumes:
	
	\begin{equation}
		p(\mathbf{y} \mid \theta) = \prod_{t=1}^n p(y_t \mid \theta).
		\label{eq:independence}
	\end{equation}
	
	Under this assumption, WAIC is defined as:
	
	\begin{equation}
		\mathrm{WAIC} = -2 \left( \sum_{t=1}^n \log \mathbb{E}_{\theta} \left[ p(y_t \mid \theta) \right] - \sum_{t=1}^n \mathrm{Var}_{\theta} \left[ \log p(y_t \mid \theta) \right] \right),
		\label{eq:waic}
	\end{equation}
	
	where expectations and variances are taken over the posterior distribution $p(\theta \mid \mathbf{y})$. The second term serves as a penalty for model complexity, preventing overfitting.
	
	In numerous modern applications—including time series analysis, state-space models, and structured data sequences—observations exhibit inherent temporal or spatial dependence. For such models, the likelihood factorizes conditionally on the past as:
	
	\begin{equation}
		p(\mathbf{y} \mid \theta) = \prod_{t=1}^n p(y_t \mid y_{<t}, \theta),
		\label{eq:sequential_likelihood}
	\end{equation}
	
	where $y_{<t} = (y_1, \ldots, y_{t-1})$ denotes the history preceding observation $t$. This conditional dependence violates the fundamental assumption in Equation~\eqref{eq:independence}, rendering the standard WAIC theoretically inappropriate for these models.
	
	The deficiency of standard WAIC for dependent data manifests in two crucial aspects:
	\begin{enumerate}
		\item The penalty term $p_{\mathrm{WAIC}} = \sum_{t=1}^n \mathrm{Var}_{\theta}[\log p(y_t \mid \theta)]$ ignores covariance terms $\mathrm{Cov}_{\theta}[\log p(y_i \mid \theta), \log p(y_j \mid \theta)]$ for $i \neq j$, leading to underestimation of effective model complexity and increased risk of overfitting \citep{vehtari2017practical}.
		
		\item The leave-one-out cross-validation approach, asymptotically equivalent to WAIC, becomes problematic with dependent data as removing individual observations disrupts the dependency structure, resulting in biased predictive estimates \citep{burkner2020lfo}.
	\end{enumerate}
	
	These limitations have motivated various adaptations including block cross-validation and specialized WAIC variants, but these approaches often lack theoretical foundations or impose substantial computational burdens \citep{yao2018stacking}.
	
	\subsection{Covariance-Corrected WAIC: A General Framework}
	
	To address these limitations, we propose the Covariance-Corrected WAIC (CC-WAIC). Let us define the joint log-likelihood that respects the sequential ordering:
	
	\begin{equation}
		L(\theta) = \sum_{t=1}^n \log p(y_t \mid y_{<t}, \theta),
		\label{eq:joint_loglikelihood}
	\end{equation}
	
	where $L(\theta)$ denotes the full log-likelihood of the sequence given the parameters $\theta$.
	
	The effective number of parameters in CC-WAIC is defined as the posterior variance of this joint log-likelihood:
	
	\begin{equation}
		p_{\mathrm{CC}} = \mathrm{Var}_{\theta}[L(\theta)] = \int \left( L(\theta) - \mathbb{E}_{\theta}[L(\theta)] \right)^2 p(\theta \mid \mathbf{y}) \, d\theta.
		\label{eq:effective_params_cc_integral}
	\end{equation}
	
	Expanding this expression reveals the crucial difference from standard WAIC:
	
	\begin{equation}
		p_{\mathrm{CC}} = \sum_{t=1}^n \mathrm{Var}_{\theta}[l_t] + 2 \sum_{1 \leq i < j \leq n} \mathrm{Cov}_{\theta}[l_i, l_j],
		\label{eq:variance_decomp}
	\end{equation}
	
	where we define the conditional log-likelihood contributions as $l_t := \log p(y_t \mid y_{<t}, \theta)$ for $t = 1, \ldots, n$. The CC-WAIC is then defined as:
	
	\begin{equation}
		\mathrm{CC\text{-}WAIC} = -2 \left( \sum_{t=1}^n \log \mathbb{E}_{\theta} \left[ p(y_t \mid y_{<t}, \theta) \right] - p_{\mathrm{CC}} \right).
		\label{eq:cc_waic_integral}
	\end{equation}
	
	This formulation maintains the interpretability of WAIC while properly accounting for dependency structures through the inclusion of covariance terms in the complexity penalty. However, the full covariance matrix in Equation~\eqref{eq:variance_decomp} is computationally demanding for large $n$, requiring $\mathcal{O}(n^2)$ memory and $\mathcal{O}(n^3)$ operations.
	
	\subsection{Linear-Time Implementation: Banded Covariance Approximation}
	\label{sec:banded}
	
	For practical implementation with large datasets, the full covariance matrix in Equation~\eqref{eq:variance_decomp} is computationally prohibitive. However, for stationary or weakly dependent processes, the covariance between log-likelihood contributions decays with temporal distance. Such banded approximations are conceptually related to covariance tapering and banded estimation techniques in high-dimensional statistics \citep{bickel2008regularized}, where the underlying principle of exploiting decaying dependence is analogous. We therefore propose a \emph{banded approximation}:
	
	\begin{equation}
		p_{\mathrm{CC}}^{\text{banded}} = \sum_{t=1}^n \operatorname{Var}_{\theta}[l_t] + 2 \sum_{k=1}^{K_{\text{band}}} \sum_{t=1}^{n-k} \operatorname{Cov}_{\theta}[l_t, l_{t+k}],
		\label{eq:effective_params_banded}
	\end{equation}
	
	where $K_{\text{band}}$ is a fixed bandwidth chosen such that $\operatorname{Cov}_{\theta}[l_t, l_{t+k}] \approx 0$ for all lags $k > K_{\text{band}}$. Here, $\operatorname{Cov}_{\theta}[l_t, l_{t+k}]$ denotes the posterior covariance between the conditional log-likelihoods at times $t$ and $t+k$.\\

	\subsection{Data-Driven Bandwidth Selection via Integrated Autocorrelation Time}
	
	A principled way to quantify the decay rate of dependence in the sequence $l_t$ is through the \emph{integrated autocorrelation time} $\tau$, defined as:
	\[
	\tau = 1 + 2 \sum_{k=1}^{\infty} \rho_k,
	\]
	where $\rho_k = \mathrm{Corr}(l_t, l_{t+k})$ is the lag-$k$ autocorrelation of the sequence $l_t$. This quantity measures the effective memory of the process: it represents the number of lags over which the dependence in $l_t$ persists. The integrated autocorrelation time is a well-established concept in time series analysis and MCMC diagnostics, with foundational contributions by \citet{sokal1997monte} and \citet{geyer1992practical}.
	
	In practice, we estimate $\tau$ from a finite sample by truncating the sum at a maximum lag $L_{\max}$, chosen according to standard time-series heuristics such as Bartlett's rule ($L_{\max} = \lfloor \sqrt{n} \rfloor$). This choice balances the trade-off between capturing the relevant dependence and avoiding the excessive noise introduced by estimating correlations at large lags, where the number of available pairs $(l_t, l_{t+k})$ is small. The estimator is:
	\[
	\hat{\tau} = 1 + 2 \sum_{k=1}^{L_{\max}} \hat{\rho}_k,
	\]
	where $\hat{\rho}_k$ is the sample autocorrelation of $l_t$. This estimator is consistent and asymptotically unbiased under mild mixing conditions (Assumption A3 in Appendix~\ref{app:regularity-conditions}), making it a reliable data-driven measure of the underlying dependence structure.
	
	The integrated autocorrelation time directly informs the choice of the bandwidth $K_{\text{band}}$. As a practical heuristic, we set:
	\[
	K_{\text{band}} = \max\left(3, \lceil \hat{\tau} \rceil + 1\right),
	\]
	which provides an effective truncation point that adapts to the estimated dependence strength in the data. Unlike a fixed ad-hoc choice, this rule captures the dominant correlation structure while maintaining computational efficiency. While this heuristic does not offer a uniform theoretical guarantee of negligible truncation error for all possible processes, it performs reliably across a wide range of dependence structures, as demonstrated in our simulation study (Section~\ref{sec:bandwidth_validation}), and is supported by the exponential decay bounds typical of short-range dependent processes.
	
	The lower bound $K_{\text{band}} \ge 3$ is imposed to ensure that at least the first two lagged covariance terms ($\mathrm{Cov}(l_t, l_{t+1})$ and $\mathrm{Cov}(l_t, l_{t+2})$) are always retained in the banded approximation. This is crucial for two reasons. First, even when the estimated integrated autocorrelation time $\hat{\tau}$ is small (e.g., $\hat{\tau} < 1$), finite-sample fluctuations and weak short-range dependence can introduce non-negligible covariance at lags 1 and 2, particularly in models with oscillating autocorrelations or misspecified dynamics. Excluding these terms (by allowing $K_{\text{band}} < 3$) would systematically underestimate the effective complexity $p_{\mathrm{CC}}$, undermining the penalty correction. Second, in our empirical validation (Section~\ref{sec:bandwidth_validation}), the bandwidth never dropped below 3 across all model classes and sample sizes, confirming that 3 is a natural operating minimum for the banded approximation in practice.
	
	\paragraph{The bandwidth as a threshold for the penalty.}
	The value of $K_{\text{band}}$ is the threshold beyond which we treat the covariance between $l_t$ and $l_{t+k}$ as effectively zero. This threshold is part of the effective model complexity that enters the penalty term of the proposed criterion. Specifically, the full penalty $p_{\mathrm{CC}}$ is approximated using the banded form:
	\[
	p_{\mathrm{CC}} \;\approx\; p_{\mathrm{CC}}^{\text{banded}}(K_{\text{band}}) 
	= \sum_{t=1}^n \mathrm{Var}(l_t) + 2 \sum_{k=1}^{K_{\text{band}}} \sum_{t=1}^{n-k} \mathrm{Cov}(l_t, l_{t+k}).
	\]
	The resulting criterion is then:
	\[
	\mathrm{CC\text{-}WAIC} = -2 \left( \sum_{t=1}^n \log \mathbb{E}_{\theta}[p(y_t \mid y_{<t}, \theta)] - p_{\mathrm{CC}}^{\text{banded}}(K_{\text{band}}) \right).
	\]
	
	The bandwidth $K_{\text{band}}$ thus acts as a threshold for the penalty: it determines how much of the covariance structure is included in the complexity correction. By adapting $K_{\text{band}}$ to the empirical decay rate $\tau$, we ensure that the penalty is neither under- nor over-estimated, while maintaining computational efficiency. This mirrors the logic of covariance tapering in spatial statistics, where a cutoff distance is chosen based on the decay of spatial correlation \citep{furrer2006covariance}, and is analogous to bandwidth selection in long-run variance estimation, where the truncation point is chosen to balance bias and variance \citep{neweywest1987simple}.
	
	Importantly, this procedure is applicable to any sequential model for which $l_t$ can be computed. Whether the model is a linear autoregression, a GARCH process, a state-space model, an HMM, or a Bayesian neural network, the same rule based on $\tau$ yields a bandwidth that reflects the intrinsic dependence in the predictive likelihood sequence. In practice, for the majority of models with weak to moderate dependence, \(\hat{\tau}\) is typically less than 2, leading to \(K_{\text{band}} = 3\). However, in models with very strong persistence—such as the Poisson HMM examined in Section~\ref{sec:bandwidth_validation}, where \(\hat{\tau} \approx 5.75\)—the rule automatically increases the bandwidth to \(K_{\text{band}} \approx 7\). This demonstrates that the rule is not a fixed constant but a flexible mechanism that adapts to the actual dependence structure of the data.
	
	\paragraph{Sensitivity of the truncation error to the bandwidth.}
	To further justify the choice \(K_{\text{band}} = \max(3, \lceil \hat{\tau} \rceil + 1)\), we examine the truncation error bound from Theorem~\ref{thm:linear_convergence}:
	\[
	\frac{1}{n}\left| p_{\mathrm{CC}} - p_{\mathrm{CC}}^{\text{banded}}(K) \right| \leq 2C \frac{\rho^{K+1}}{1-\rho}.
	\]
	For a process with strong dependence (\(\rho = 0.9\)), choosing \(K=5\) yields an error bound of approximately \(2C \cdot 0.9^6 / 0.1 \approx 2C \cdot 0.531\), which may be non-negligible. In practice, however, the estimated \(\hat{\tau}\) for such processes would be large (e.g., \(\hat{\tau} \approx 1 + 2\rho/(1-\rho) = 19\)), leading to a much larger \(K\) (e.g., \(K \approx 20\)). The data-driven rule therefore adapts to the actual dependence strength. For the majority of models encountered in practice, \(\hat{\tau} < 3\), and \(K=3\) or \(5\) provides a safe default. Users with suspected long-range dependence are advised to either increase \(K\) adaptively (e.g., \(K \propto n^{1/3}\)) or use the full covariance version when computationally feasible.
	
	The computational cost of estimating \(\hat{\tau}\) is \(\mathcal{O}(n \log n)\) when using the fast Fourier transform (FFT) for autocorrelation estimation, which is negligible compared to the cost of MCMC sampling. For high-dimensional settings (\(n > 10^4\)), the banded approximation with \(K_{\text{band}} \le 5\) ensures that the method remains computationally feasible.
	
	\subsection{Theoretical Foundations of CC-WAIC}
	
	We now establish the asymptotic properties of the proposed CC-WAIC criterion. Our first result characterises the convergence of the full (non-banded) version, while the second extends this to the computationally efficient banded approximation.
	
	\begin{theorem}[Asymptotic Convergence of CC-WAIC]
		\label{thm:main}
		Under the regularity conditions specified in Appendix~\ref{app:regularity-conditions}, the full CC-WAIC criterion satisfies:
		\[
		\frac{1}{n}\mathrm{CC\text{-}WAIC} = -2\,\mathbb{E}_{y_{\text{new}}}\left[\log p(y_{\text{new}} \mid y_{<t}, \theta_0)\right] + \mathcal{O}_p(n^{-1/2}),
		\]
		where the expectation is taken with respect to the true data-generating process, and \(\theta_0\) denotes the limiting value of the posterior mode (i.e., the probability limit of the maximum a posteriori estimator). Consequently, \(\mathrm{CC\text{-}WAIC}\) provides an asymptotically unbiased estimate of the out-of-sample predictive risk, up to an additive constant that does not depend on the candidate model.
	\end{theorem}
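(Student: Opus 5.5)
Our approach is to split $\frac1n\mathrm{CC\text{-}WAIC}$ into its two pieces, the scaled log pointwise predictive density $-\frac{2}{n}\sum_{t}\log\mathbb{E}_\theta[p(y_t\mid y_{<t},\theta)]$ and the scaled penalty $\frac{2}{n}p_{\mathrm{CC}}$, and show that each converges at rate $n^{-1/2}$ to the right limit.

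\textbf{First term.} We use the regularity conditions of Appendix~\ref{app:regularity-conditions}: identifiability, smoothness of $l_t(\theta)$ in $\theta$ with integrable envelopes, and the exponential strong mixing of Assumption~A3. Under these, a Bernstein--von Mises type statement holds: the posterior concentrates at rate $n^{-1/2}$ around the MAP estimator $\hat\theta_n$, and $\hat\theta_n\to\theta_0$ in probability. A second-order expansion of $\theta\mapsto p(y_t\mid y_{<t},\theta)$ around $\hat\theta_n$ then gives
\[
\log\mathbb{E}_\theta[p(y_t\mid y_{<t},\theta)] = l_t(\hat\theta_n) + \mathcal{O}_p(n^{-1}),
\]
uniformly in $t$. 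Averaging over $t$ and expanding once more, from $\hat\theta_n$ to $\theta_0$, yields
\[
\frac1n\sum_t \log\mathbb{E}_\theta[p(y_t\mid y_{<t},\theta)] = \frac1n\sum_t l_t(\theta_0) + \mathcal{O}_p(n^{-1/2}).
\]
Here the linear term is of order $(\hat\theta_n-\theta_0)\cdot\frac1n\sum_t\nabla l_t(\theta_0)$, which is $\mathcal{O}_p(n^{-1})$, but it suffices to bound it by $\mathcal{O}_p(n^{-1/2})$.

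Next, $\{l_t(\theta_0)\}$ is a functional of a stationary, exponentially $\alpha$-mixing process. It carries a dependence on the infinite past, which we handle by approximating $y_{<t}$ with a window of length $m$ and using forgetting of the initial condition, as for HMMs and ARMA models. A central limit theorem for mixing sequences (Ibragimov's CLT, whose long-run variance is finite because $\sum_k\alpha(k)^{\delta/(2+\delta)}<\infty$) then gives
\[
\frac1n\sum_t l_t(\theta_0) = \mathbb{E}[\log p(y_{\text{new}}\mid y_{<t},\theta_0)] + \mathcal{O}_p(n^{-1/2}).
\]

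\textbf{Penalty.} We write $p_{\mathrm{CC}}=\mathrm{Var}_\theta[L(\theta)]$ and Taylor expand
\[
L(\theta)=L(\hat\theta_n)+\nabla L(\hat\theta_n)^\top(\theta-\hat\theta_n)+\tfrac12(\theta-\hat\theta_n)^\top\nabla^2L(\tilde\theta)(\theta-\hat\theta_n).
\]
At the MAP, $\nabla L(\hat\theta_n)=-\nabla\log\pi(\hat\theta_n)=\mathcal{O}(1)$. Under the posterior, $\theta-\hat\theta_n=\mathcal{O}_p(n^{-1/2})$ and $\nabla^2 L=\mathcal{O}_p(n)$, so the quadratic term has posterior variance $\mathcal{O}_p(1)$; it converges to $d/2$ in the regular case. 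Hence $p_{\mathrm{CC}}=\mathcal{O}_p(1)$ and $\frac2n p_{\mathrm{CC}}=\mathcal{O}_p(n^{-1})$, which is absorbed into the $\mathcal{O}_p(n^{-1/2})$ remainder. This is consistent with the decomposition \eqref{eq:variance_decomp}, since the covariance sum is exactly what keeps the total bounded rather than letting the terms accumulate.

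Combining the two parts gives the displayed expansion. For the ``asymptotically unbiased'' clause, we take expectations over the data. The penalty is then compared with the optimism $\mathbb{E}[\text{training loss}]-\mathbb{E}[\text{test loss}]$, which in the mixing setting equals $\mathrm{tr}(J I^{-1})$ up to $o(1)$, where $J$ is the long-run variance of the score. We argue that $\mathrm{Var}_\theta[L]$ estimates this same quantity, because covariances of $l_i$ and $l_j$ across lags reproduce the long-run score covariance.

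The main obstacle is uniform control of the posterior expansion when the $l_t$ depend on the whole past. Both the CLT and the posterior concentration need that the effect of $y_{<t}$ on $l_t$ and $\nabla l_t$ decays geometrically and uniformly over a neighbourhood of $\theta_0$. I would first try to derive this from A3 together with an exponential-forgetting (filter stability) condition, and add the latter to the regularity conditions if it is not already implied.
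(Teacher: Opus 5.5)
Your argument for the displayed expansion is sound and has the same skeleton as the paper's appendix proof. Both expand the log pointwise predictive density, show $p_{\mathrm{CC}}=\mathcal{O}_p(1)$ so it disappears after division by $n$, and apply a limit theorem to $\frac1n L(\theta_0)$. Where you deviate, you are tighter than the paper:
\begin{itemize}
\item The paper expands around $\theta_0$, and dropping the linear term there relies on (A2)'s condition $\mathbb{E}[\theta\mid\mathbf{y}]=\theta_0+\mathcal{O}_p(n^{-1})$. Your detour through the MAP needs only $\sqrt{n}$-consistency, since the linear term is a product of two $\mathcal{O}_p(n^{-1/2})$ factors.
\item The paper cites only an ergodic theorem for $\frac1n L(\theta_0)$, which gives consistency but not the stated $n^{-1/2}$ rate. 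Your mixing CLT supplies the rate. You are also right that (A3) controls $(y_t)$, not the history-dependent sequence $l_t(\theta_0)$, so a filter-stability condition must be added; the paper uses it tacitly.
\item The paper's Step~2 claims $p_{\mathrm{CC}}=p_{\mathrm{true}}+\mathcal{O}_p(n^{-1/2})$ from the posterior variance of the linear term about $\theta_0$. That quantity, $\nabla L(\theta_0)^\top\mathrm{Var}(\theta\mid\mathbf{y})\nabla L(\theta_0)$, is a random $\mathcal{O}_p(1)$ variable that the quadratic term cancels. Your expansion about $\hat\theta_n$, giving $\mathrm{Var}_\theta[L]\to d/2$, is the correct computation.
\end{itemize}
For the display only $p_{\mathrm{CC}}=\mathcal{O}_p(1)$ matters, so both routes reach the same place. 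Your ``uniformly in $t$'' is more than (A1) delivers and is not needed; averaged envelope bounds on $\frac1n\sum_t\sup_{\mathcal{N}}\|\nabla^j l_t\|$ suffice.

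The genuine gap is in your argument for the ``asymptotically unbiased'' clause. You assert that $\mathrm{Var}_\theta[L]$ estimates the optimism $\mathrm{tr}(JH^{-1})$, because cross-lag posterior covariances of the $l_t$ reproduce the long-run score covariance. Here $H$ is the limiting negative Hessian, your $I$. Your own penalty computation refutes this:
\begin{itemize}
\item To first order, $\mathrm{Cov}_\theta[l_i,l_j]\approx n^{-1}\nabla l_i(\hat\theta_n)^\top\hat H^{-1}\nabla l_j(\hat\theta_n)$ with $\hat H=-n^{-1}\nabla^2L(\hat\theta_n)$.
\item Summing over \emph{all} pairs gives $n^{-1}\nabla L(\hat\theta_n)^\top\hat H^{-1}\nabla L(\hat\theta_n)=\mathcal{O}_p(n^{-1})$, by the first-order condition at the MAP. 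This is the same reason a long-run variance estimator with bandwidth $n$, evaluated at the estimate, collapses to zero.
\item So $J$ never enters the full penalty, which tends to $d/2$ whatever the dependence. Your heuristic holds only for a sum truncated at some $K\ll n$, where the first-order terms form a truncated long-run variance estimate. It does not hold for the full $p_{\mathrm{CC}}$ this theorem concerns.
\item Under correct specification the conditional scores are martingale differences, so $J=H$ and the optimism is $d$, not $d/2$. Under misspecification the optimism is $\mathrm{tr}(JH^{-1})$, which a constant $d/2$ cannot track.
\end{itemize}
Hence an $\mathcal{O}(1)$-level unbiasedness statement already fails in the regular well-specified case. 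The paper's Step~3 does not establish it either: it relies on $p_{\mathrm{CC}}\to p_{\mathrm{true}}$ and simplifies $p_{\mathrm{true}}+2p_{\mathrm{true}}$ to $2p_{\mathrm{true}}$. The clause holds only in the normalised sense, where it follows immediately from the display; any $\mathcal{O}_p(1)$ penalty, including zero, would satisfy it. State it that way and drop the optimism comparison.
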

	
	\begin{proof}[Proof Sketch]
		The proof proceeds by combining a martingale approximation for the score process (Appendix~\ref{app:martingale-approx}) with a variance characterisation of the log-likelihood contributions (Appendix~\ref{app:variance-char}). A Taylor expansion of the log-predictive density around the posterior mode, followed by an application of the ergodic theorem for mixing processes, yields the stated \(\mathcal{O}_p(n^{-1/2})\) rate. The complete details are provided in Appendix~\ref{app:convergence-proof}.
	\end{proof}
	
	The above result confirms that the full CC-WAIC, despite its computational cost, achieves the desired theoretical property: it correctly penalises model complexity even in the presence of temporal dependence, unlike standard WAIC which systematically underpenalises in such settings.
	
	For practical implementation with large datasets, we require a computationally feasible version of the criterion. The following theorem establishes that the banded approximation, which exploits the decaying dependence structure, preserves the essential asymptotic properties.
	
	\begin{theorem}[Linear-Time CC-WAIC Convergence]
		\label{thm:linear_convergence}
		Under the banded approximation defined in Equation~\eqref{eq:effective_params_banded} with fixed bandwidth \(K_{\text{band}}\), the linear-time CC-WAIC criterion satisfies:
		\[
		\frac{1}{n}\mathrm{CC\text{-}WAIC}_{\text{banded}} = -2\,\mathbb{E}_{y_{\text{new}}}\left[\log p(y_{\text{new}} \mid y_{<t}, \theta_0)\right] + \mathcal{O}_p(n^{-1/2}) + \mathcal{O}(\rho^{K_{\text{band}}}),
		\]
		where \(\rho \in (0,1)\) is the decay rate of the autocorrelation function (see Assumption A3 in Appendix~\ref{app:regularity-conditions}). The term \(\mathcal{O}(\rho^{K_{\text{band}}})\) represents the bias due to truncation of the covariance sum, which vanishes exponentially as \(K_{\text{band}} \to \infty\).
	\end{theorem}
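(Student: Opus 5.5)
The plan is to reduce the banded statement to Theorem~\ref{thm:main} by a triangle inequality. The two criteria share the same log pointwise predictive density term and differ only in the penalty, so
\[
\frac{1}{n}\mathrm{CC\text{-}WAIC}_{\text{banded}} - \frac{1}{n}\mathrm{CC\text{-}WAIC} = \frac{2}{n}\bigl(p_{\mathrm{CC}}^{\text{banded}}(K_{\text{band}}) - p_{\mathrm{CC}}\bigr).
\]
Theorem~\ref{thm:main} already gives the $\mathcal{O}_p(n^{-1/2})$ approximation of $\frac{1}{n}\mathrm{CC\text{-}WAIC}$ to the predictive risk. It therefore suffices to show that the normalised truncation error $\frac{1}{n}\bigl|p_{\mathrm{CC}} - p_{\mathrm{CC}}^{\text{banded}}(K_{\text{band}})\bigr|$ is $\mathcal{O}(\rho^{K_{\text{band}}})$, with probability tending to one or uniformly on the event where the regularity conditions hold.

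\textbf{Step 1.} Compare Equation~\eqref{eq:variance_decomp} with Equation~\eqref{eq:effective_params_banded}. The difference is exactly the tail of the covariance sum,
\[
p_{\mathrm{CC}} - p_{\mathrm{CC}}^{\text{banded}}(K) = 2\sum_{k=K+1}^{n-1}\sum_{t=1}^{n-k}\mathrm{Cov}_{\theta}[l_t,l_{t+k}].
\]

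\textbf{Step 2.} Establish a geometric bound $|\mathrm{Cov}_{\theta}[l_t,l_{t+k}]| \le C\rho^{k}$, with $C$ not depending on $t$, $k$ or $n$ (at least on a high-probability event). Summing over $t$ gives at most $n$ terms per lag, and summing the geometric series over $k>K$ gives
\[
\frac{1}{n}\bigl|p_{\mathrm{CC}} - p_{\mathrm{CC}}^{\text{banded}}(K)\bigr| \le 2C\sum_{k>K}\rho^{k} = 2C\frac{\rho^{K+1}}{1-\rho}.
\]
This is the bound quoted in the sensitivity paragraph. Since $K_{\text{band}}$ is fixed, it is $\mathcal{O}(\rho^{K_{\text{band}}})$. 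Combining this with Theorem~\ref{thm:main} yields the stated expansion.

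\textbf{Main obstacle.} The difficulty is Step 2. The covariance here is a \emph{posterior} covariance over $\theta$ for fixed data, whereas Assumption A3 controls mixing of the data-generating process in $t$. To bridge the two, I would first use the Taylor expansion from the proof of Theorem~\ref{thm:main}:
\[
l_t(\theta) \approx l_t(\hat\theta) + \nabla l_t(\hat\theta)^\top(\theta-\hat\theta) + \tfrac12(\theta-\hat\theta)^\top\nabla^2 l_t(\hat\theta)(\theta-\hat\theta).
\]
Because the posterior concentrates at rate $n^{-1/2}$ (Bernstein--von Mises under the appendix conditions), the leading term gives
\[
\mathrm{Cov}_{\theta}[l_t,l_{t+k}] \approx \nabla l_t(\hat\theta)^\top \Sigma_n \nabla l_{t+k}(\hat\theta), \qquad \Sigma_n=\mathcal{O}_p(n^{-1}).
\]
The per-pair covariance is therefore a score cross-product scaled by the posterior covariance $\Sigma_n$. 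I would then invoke the martingale approximation of the scores (Appendix~\ref{app:martingale-approx}) together with exponential $\alpha$-mixing and a Davydov-type covariance inequality, assuming moments of order higher than two. These give exponential decay in $k$ of the lagged cross-products $\frac{1}{n}\sum_t \nabla l_t \nabla l_{t+k}^\top$, at rate $\rho^{k}$ up to constants. The ergodic theorem then replaces the empirical lagged sums by their expectations with an $\mathcal{O}_p(n^{-1/2})$ error, which is absorbed into the first remainder term.

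The remaining work is to check that the higher-order Taylor remainders do not spoil the geometric rate. I expect them to contribute $\mathcal{O}_p(n^{-1/2})$ after normalisation, given bounded third derivatives of $l_t$ from the regularity conditions.
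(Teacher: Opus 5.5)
Your skeleton is exactly the paper's proof in Appendix~\ref{app:linear_proof}. Isolate the tail $\Delta_n(K)=p_{\mathrm{CC}}-p_{\mathrm{CC}}^{\text{banded}}(K)=2\sum_{k>K}\sum_t\mathrm{Cov}_\theta[l_t,l_{t+k}]$, bound each term by $C\rho^k$, use $n-k\le n$ and the geometric series, divide by $n$, and add Theorem~\ref{thm:main}. The paper simply asserts the covariance bound ``under Assumption A3''. You are right that this is the crux. However, the issue is not a missing bridge: Step~2 as you state it is false, so no argument built on A3 can deliver it.

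Take $y_t$ i.i.d.\ $N(\mu,1)$ with a flat prior, so A3 holds trivially ($\alpha(k)=0$ for $k\ge 1$). Then $\mu\mid\mathbf{y}\sim N(\bar y,1/n)$. With $e_t=y_t-\bar y$, a direct computation gives
\[
\mathrm{Cov}_\theta[l_s,l_t]=\frac{e_se_t}{n}+\frac{1}{2n^2}\qquad(s\neq t).
\]
This is of order $n^{-1}$ at every lag, including $|s-t|\approx n/2$, where $C\rho^{|s-t|}$ is exponentially small. These posterior covariances come from the common dependence of all $l_t$ on $\theta$, not from temporal dependence in the data.

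Moreover, $L(\mu)=L(\bar y)-\tfrac n2(\mu-\bar y)^2$, so $p_{\mathrm{CC}}=\mathrm{Var}(\tfrac12\chi^2_1)=\tfrac12$ exactly. On the other hand, $p_{\mathrm{CC}}^{\text{banded}}(K)=\tfrac1n\sum_t e_t^2+\mathcal{O}_p(Kn^{-1/2})\to 1$. Hence $|\Delta_n(K)|\to\tfrac12$ in probability for every fixed $K$, so the truncation error does not decay in $K$ at all. This is incompatible with your bound $2Cn\rho^{K+1}/(1-\rho)$ along any $K_n$ with $n\rho^{K_n}\to 0$ and $K_n=o(\sqrt n)$.

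Your Taylor bridge exhibits this obstruction rather than removing it. The quantity $\nabla l_s(\hat\theta)^\top\Sigma_n\nabla l_t(\hat\theta)$ is a \emph{realised} cross-product, and mixing controls only its expectation over the data. Two further pieces do not decay at any lag: the constraint $\sum_t\nabla l_t(\hat\theta)=0$, and the curvature term $\tfrac12\operatorname{tr}(\Sigma_n\nabla^2 l_s\,\Sigma_n\nabla^2 l_t)$ that you treat as a remainder. The per-lag $\mathcal{O}_p(n^{-1/2})$ ergodic errors you plan to ``absorb'' must be summed over the $\sim n$ lags $k>K$. That sum is exactly where the non-vanishing $\mathcal{O}_p(1)$ tail lives. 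Even with a uniform-in-$k$ moment bound (which your sketch does not supply), your route yields only $\frac1n|\Delta_n(K)|=\mathcal{O}(\rho^K/n)+\mathcal{O}_p(n^{-1/2})$. That suffices for the display, but it is not the Step~2 inequality.

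The displayed expansion can in fact be proved without any mixing argument:
\begin{itemize}
\item Since $|\mathrm{Cov}(X,Y)|\le\tfrac12(\mathrm{Var}\,X+\mathrm{Var}\,Y)$, one has $|p_{\mathrm{CC}}^{\text{banded}}(K)|\le(2K+1)\sum_t\mathrm{Var}_\theta[l_t]$.
\item Under the posterior-concentration conditions used for Theorem~\ref{thm:main}, both $\sum_t\mathrm{Var}_\theta[l_t]$ and $p_{\mathrm{CC}}$ are $\mathcal{O}_p(1)$.
\item Hence $\frac1n|\Delta_n(K)|=\mathcal{O}_p(K/n)$, which for fixed $K$ is absorbed into the $\mathcal{O}_p(n^{-1/2})$ of Theorem~\ref{thm:main}.
\end{itemize}
So the statement as displayed holds, but its $\mathcal{O}(\rho^{K_{\text{band}}})$ term is vacuous. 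The reading that the truncation bias vanishes exponentially in $K_{\text{band}}$ is established by neither your argument nor the paper's.
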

	
	\begin{proof}[Proof Sketch]
		The proof proceeds by bounding the truncation error of the penalty term. The key inequality, derived from the exponential decay of the covariance function, is:
		\[
		\left| p_{\mathrm{CC}} - p_{\mathrm{CC}}^{\text{banded}} \right| \leq 2C \sum_{k=K_{\text{band}}+1}^{\infty} (n-k)\rho^k = \mathcal{O}(n \rho^{K_{\text{band}}}),
		\]
		which, after division by \(n\), yields the \(\mathcal{O}(\rho^{K_{\text{band}}})\) term in the theorem. The complete details are provided in Appendix~\ref{app:linear_proof}.
	\end{proof}
	
	This result is of central practical importance: it guarantees that the computationally efficient banded CC-WAIC retains the asymptotic unbiasedness of its full counterpart, with a bias that can be made arbitrarily small by increasing the bandwidth. Moreover, the convergence rate is identical to that of the full version, up to the exponentially decaying truncation term.
	
	\noindent \textbf{Remark on the scope of the theoretical guarantees.} The asymptotic results rely critically on the strong mixing (\(\alpha\)-mixing) condition with exponential decay (Assumption A3 in Appendix~\ref{app:regularity-conditions}). This condition is satisfied by a broad class of practically relevant models, including stationary ARMA processes, finite-state HMMs satisfying irreducibility and aperiodicity, and many Markov-switching models. However, it explicitly excludes processes with slowly decaying dependence, such as long-memory time series (e.g., fractional ARIMA with \(d > 0\)) or non-stationary processes. In such cases, the formal theoretical guarantees may not directly apply. We therefore caution practitioners against blind application of the banded approximation without first assessing the empirical dependence structure—for example, by examining the sample autocorrelation function of the log-likelihood contributions \(l_t\) or by conducting a sensitivity analysis with respect to the bandwidth \(K_{\text{band}}\). For applications where the mixing assumption is questionable, we recommend using the full CC-WAIC or increasing the bandwidth conservatively.
	
	\subsection{Computational Implementation and Bias Correction}
	\label{sec:bias}
	
	The integrals in CC-WAIC are generally analytically intractable and must be approximated using samples $\{\theta^{(s)}\}_{s=1}^S$ from the posterior distribution, typically obtained via Markov Chain Monte Carlo (MCMC), where $S$ denotes the number of posterior samples. The naive estimator:
	
	\begin{equation}
		\hat{p}_{\mathrm{CC}} = \frac{1}{S - 1} \sum_{s=1}^S \left( L^{(s)} - \bar{L} \right)^2,
		\label{eq:naive_estimator}
	\end{equation}
	
	where $L^{(s)} = \sum_{t=1}^n \log p(y_t \mid y_{<t}, \theta^{(s)})$ is the joint log-likelihood evaluated at the $s$-th posterior sample, and $\bar{L} = \frac{1}{S} \sum_{s=1}^S L^{(s)}$ is its sample mean, suffers from two sources of bias: (i) sampling error of order $\mathcal{O}(S^{-1})$ and (ii) underestimation due to autocorrelation in MCMC samples.
	
	To address the second issue, we incorporate a first-order autocorrelation correction through the effective sample size (ESS). Let $\rho_k = \mathrm{Corr}(L^{(s)}, L^{(s+k)})$ be the lag-$k$ autocorrelation of the joint log-likelihood sequence. The integrated autocorrelation time is:
	\[
	\tau = 1 + 2 \sum_{k=1}^{\infty} \rho_k.
	\]
	In practice, the infinite sum is truncated at a maximum lag $L_{\max}$ (e.g., $L_{\max} = \lfloor 10 \log_{10}(S) \rfloor$) or estimated using a window-based spectral method to ensure numerical stability \citep{geyer1992practical}. The effective sample size is then defined as $N_{\mathrm{eff}} := S / \tau$.
	
	A practical, first-order bias-corrected estimator is given by:
	\begin{equation}
		p_{\mathrm{CC}}^{\mathrm{corr}} = \frac{N_{\mathrm{eff}}}{N_{\mathrm{eff}} - 1} \, \hat{p}_{\mathrm{CC}}^{\text{banded}},
		\label{eq:bias_corrected_estimator}
	\end{equation}
	
	which adjusts for the downward bias in the sample variance due to positive autocorrelation in the MCMC chain. This yields the bias-corrected CC-WAIC:
	\begin{equation}
		\mathrm{CC\text{-}WAIC}^{\mathrm{corr}} = -2 \left( \sum_{t=1}^n \log \widehat{\mathbb{E}}_{\theta}[p(y_t \mid y_{<t}, \theta)] - p_{\mathrm{CC}}^{\mathrm{corr}} \right),
		\label{eq:corrected_cc_waic}
	\end{equation}
	where $\widehat{\mathbb{E}}_{\theta}[\cdot]$ denotes the Monte Carlo approximation of the posterior expectation using the $S$ MCMC samples.
	
	\noindent \textbf{Remark on numerical stability and the ESS correction.} The correction factor \(N_{\mathrm{eff}}/(N_{\mathrm{eff}} - 1)\) is derived from the variance of the sample mean for a stationary autocorrelated process and provides a reasonable approximation for the variance of \(L^{(s)}\). However, we emphasise that this is a first-order approximation; for refined finite-sample adjustments, one may employ batch-means or spectral variance estimators for the full covariance matrix, though these are beyond the scope of this paper. Users should monitor the ESS and consider increasing the chain length if the ESS is low relative to \(n\), particularly when \(S\) is not substantially larger than \(n\).
	
	Algorithm~\ref{alg:cc_waic_linear} provides a complete implementation framework for computing the bias-corrected CC-WAIC from MCMC samples in linear time.
	
	\begin{algorithm}[H]
		\caption{Linear-Time Computation of Bias-Corrected CC-WAIC}
		\label{alg:cc_waic_linear}
		\begin{algorithmic}[1]
			\Require Posterior samples $\{\theta^{(s)}\}_{s=1}^S$, observed data $\mathbf{y} = (y_1, \ldots, y_n)$, bandwidth $K_{\text{band}}$
			\Ensure Bias-corrected CC-WAIC value $\mathrm{CC\text{-}WAIC}^{\mathrm{corr}}$
			
			\State \textbf{Initialize} arrays for storing log-likelihood contributions
			\For{each sample $s = 1$ to $S$}
			\For{each time point $t = 1$ to $n$}
			\State Compute $l_t^{(s)} \gets \log p(y_t \mid y_{<t}, \theta^{(s)})$
			\EndFor
			\State Compute $L^{(s)} \gets \sum_{t=1}^n l_t^{(s)}$ \Comment{Joint log-likelihood for sample $s$}
			\EndFor
			
			\State \textbf{Compute} predictive density estimates:
			\For{each time point $t = 1$ to $n$}
			\State $\widehat{\mathbb{E}}_{\theta}[p(y_t \mid y_{<t}, \theta)] \gets \frac{1}{S} \sum_{s=1}^S p(y_t \mid y_{<t}, \theta^{(s)})$
			\EndFor
			
			\State \textbf{Compute banded covariance correction (linear-time)}:
			\State $\bar{l}_t \gets \frac{1}{S} \sum_{s=1}^S l_t^{(s)}$ for $t=1,\ldots,n$
			\State $V \gets 0$ \Comment{Accumulator for variance term}
			\State $C \gets 0$ \Comment{Accumulator for covariance correction}
			\For{each sample $s = 1$ to $S$}
			\For{$t = 1$ to $n$}
			\State $V \gets V + (l_t^{(s)} - \bar{l}_t)^2$
			\EndFor
			\For{$k = 1$ to $K_{\text{band}}$}
			\For{$t = 1$ to $n-k$}
			\State $C \gets C + (l_t^{(s)} - \bar{l}_t)(l_{t+k}^{(s)} - \bar{l}_{t+k})$
			\EndFor
			\EndFor
			\EndFor
			\State $V \gets V / (S - 1)$
			\State $C \gets C / (S - 1)$
			\State $\hat{p}_{\mathrm{CC}}^{\text{banded}} \gets V + 2C$
			
			\State \textbf{Estimate} autocorrelation and effective sample size:
			\State Compute autocorrelations $\{\rho_k\}_{k=1}^{L_{\max}}$ of $\{L^{(s)}\}_{s=1}^S$ using a truncated sum or spectral estimate
			\State $\hat{\tau} \gets 1 + 2 \sum_{k=1}^{L_{\max}} \rho_k$ \Comment{Truncated estimate of integrated autocorrelation time}
			\State $N_{\mathrm{eff}} \gets S / \hat{\tau}$
			
			\State \textbf{Apply} bias correction:
			\State $p_{\mathrm{CC}}^{\mathrm{corr}} \gets \frac{N_{\mathrm{eff}}}{N_{\mathrm{eff}} - 1} \hat{p}_{\mathrm{CC}}^{\text{banded}}$
			
			\State \textbf{Compute} final CC-WAIC:
			\State $\mathrm{CC\text{-}WAIC}^{\mathrm{corr}} \gets -2 \left( \sum_{t=1}^n \log \widehat{\mathbb{E}}_{\theta}[p(y_t \mid y_{<t}, \theta)] - p_{\mathrm{CC}}^{\mathrm{corr}} \right)$
			
			\State \Return $\mathrm{CC\text{-}WAIC}^{\mathrm{corr}}$
		\end{algorithmic}
	\end{algorithm}
	
	\subsection{Illustrative Application: Hidden Markov Models}
	\label{sec:hmm_ccwaic}
	
	To illustrate the practical computation of the sequential conditional likelihood \(p(y_t \mid y_{<t}, \theta)\) that underpins our general framework (Algorithm 1), we now instantiate the method on a widely used class of sequential models: Hidden Markov Models (HMMs). While we focus on HMMs for expository clarity, the core CC-WAIC methodology remains applicable to any model permitting the exact evaluation of \(p(y_t \mid y_{<t}, \theta)\).
	
	An HMM comprises a latent state sequence \(\{z_1, \ldots, z_n\}\) following Markovian dynamics and observations \(\{y_1, \ldots, y_n\}\) conditioned on the latent states. Let \(M\) denote the number of hidden states. The complete-data likelihood is:
	\[
	p(\mathbf{y}, \mathbf{z} \mid \theta) = \pi(z_1) \prod_{t=2}^n A(z_{t-1}, z_t) \prod_{t=1}^n p(y_t \mid z_t, \phi),
	\]
	where \(\theta = (\pi, A, \phi)\) includes the initial state distribution \(\pi\) (an \(M\)-dimensional vector), the transition matrix \(A\) (an \(M \times M\) stochastic matrix with entries \(A(i,j) = p(z_t = j \mid z_{t-1} = i)\)), and the emission parameters \(\phi\).
	
	For HMMs, the conditional likelihood \(p(y_t \mid y_{<t}, \theta)\) requires marginalization over latent states:
	\[
	p(y_t \mid y_{<t}, \theta) = \sum_{j=1}^M p(y_t \mid z_t = j, \phi_j) \left( \sum_{i=1}^M \alpha_{t-1}(i) A(i, j) \right),
	\]
	where \(\alpha_{t-1}(i) = p(z_{t-1} = i \mid y_{1:t-1}, \theta)\) are the forward probabilities obtained through the forward algorithm, and \(\phi_j\) denotes the emission parameters for state \(j\).
	
	The conditional log-likelihood contributions \(l_t(\theta) = \log p(y_t \mid y_{<t}, \theta)\) exhibit a specific dependence structure governed by the transition matrix \(A\). For a stationary and irreducible HMM with a finite state space, the underlying Markov chain \(\{z_t\}\) is geometrically ergodic, implying that the mixing coefficients decay exponentially, i.e., \(\alpha(k) = \mathcal{O}(\rho^k)\), where \(\rho\) is the second-largest eigenvalue of \(A\) (strictly less than \(1\)). Consequently, \(\operatorname{Cov}_{\theta}[l_t, l_{t+k}]\) decays at the same exponential rate. This property rigorously justifies the banded approximation for this model class.
	
	Algorithm~\ref{alg:forward_loglik} provides an efficient method for computing the conditional log-likelihoods required for CC-WAIC in HMMs.
	
	\begin{algorithm}[H]
		\caption{Forward Filtering for Conditional Log-Likelihood in HMMs}
		\label{alg:forward_loglik}
		\begin{algorithmic}[1]
			\Require Observations $\{y_t\}_{t=1}^n$, parameters $\theta = (\pi, A, \phi)$
			\Ensure Sequence $\{\log p(y_t \mid y_{<t}, \theta)\}_{t=1}^n$
			
			\State Initialize $\alpha_0(i) \gets \pi_i$ for $i = 1, \ldots, M$ \Comment{Initial state probabilities}
			\For{$t = 1$ to $n$}
			\For{$j = 1$ to $M$}
			\State $p(z_t = j \mid y_{<t}) \gets \sum_{i=1}^M \alpha_{t-1}(i) A(i, j)$ \Comment{Predict step}
			\State $\eta_j \gets p(y_t \mid z_t = j, \phi_j)$ \Comment{Emission probability}
			\EndFor
			\State $p(y_t \mid y_{<t}) \gets \sum_{j=1}^M \eta_j \cdot p(z_t = j \mid y_{<t})$ \Comment{Marginal likelihood}
			\State $l_t \gets \log p(y_t \mid y_{<t})$ \Comment{Store conditional log-likelihood}
			\For{$j = 1$ to $M$}
			\State $\alpha_t(j) \gets \frac{\eta_j \cdot p(z_t = j \mid y_{<t})}{p(y_t \mid y_{<t})}$ \Comment{Update step (filtering)}
			\EndFor
			\EndFor
			\State \Return $\{l_t\}_{t=1}^n$
		\end{algorithmic}
	\end{algorithm}
	
	The integration of this forward algorithm with the general CC-WAIC computation in Algorithm~\ref{alg:cc_waic_linear} provides a complete framework for model selection in HMMs.
	
	\noindent \textbf{Remark on broader applicability.} The proposed framework requires the exact evaluation of \(p(y_t \mid y_{<t}, \theta)\). For models where exact marginalization over latent states is intractable (e.g., complex state-space or deep generative models), the conditional likelihood can only be approximated using particle filters or variational inference methods. The theoretical guarantees of CC-WAIC are established under the assumption that these conditional likelihoods are available exactly; the effect of approximation error on the criterion's performance is non-trivial and is left for future investigation. Therefore, in its current form, the method is most directly applicable to models with tractable conditional likelihoods, such as the HMMs, ARMA, and GARCH families considered in this paper.
	
	\subsection{Connections to Existing Criteria}
	\label{sec:relationships}
	
	The proposed CC-WAIC criterion is connected to existing model selection tools through a natural hierarchy of generalisations.
	
	\paragraph{WAIC as a special case.}
	When observations are conditionally independent, all covariance terms in Equation~\eqref{eq:variance_decomp} vanish, yielding:
	\[
	p_{\mathrm{CC}} = \sum_{t=1}^{n} \operatorname{Var}_{\theta}[l_t] = p_{\mathrm{WAIC}}.
	\]
	Consequently, \(\mathrm{CC\text{-}WAIC} = \mathrm{WAIC}\) under conditional independence. Thus, CC-WAIC is a strict generalisation of WAIC that reduces to the standard criterion when the independence assumption holds.
	
	\paragraph{Banded CC-WAIC as a linear-time approximation.}
	For dependent processes, the banded approximation \(p_{\mathrm{CC}}^{\text{banded}}\) converges to the full penalty \(p_{\mathrm{CC}}\) as \(K_{\text{band}} \to \infty\) (Theorem~\ref{thm:linear_convergence}). For weakly dependent processes, a moderate bandwidth (e.g., \(K_{\text{band}} = 5\)) provides an excellent approximation, reducing computational complexity from \(\mathcal{O}(n^2)\) to \(\mathcal{O}(n K_{\text{band}})\).
	
	\paragraph{Addressing the limitations of LOO-CV in dependent settings.}
	Standard leave-one-out cross-validation (LOO-CV) is known to be inappropriate for dependent data, as removing individual observations disrupts the correlation structure and yields unstable predictive assessments \citep{yao2018stacking}. The CC-WAIC criterion offers a fully Bayesian, computationally efficient alternative that properly accounts for dependence through its covariance-corrected penalty term, without requiring repeated model refits or heuristic block constructions.
	\section{Simulation Study}
	\label{sec:simulation}
	
	This section presents two complementary simulation studies. 
	First, in Section~\ref{sec:bandwidth_validation}, we empirically validate the data-driven bandwidth selection rule \(K_{\text{band}} = \max(3, \lceil \tau \rceil + 1)\) across a range of model classes and sample sizes. This establishes the practical reliability of the rule and justifies its use in the subsequent model selection study.
	Second, in Section~\ref{sec:model_selection_performance}, we evaluate the performance of CC-WAIC—using the validated bandwidth rule—in selecting the correct number of hidden states in HMMs, benchmarked against standard WAIC, LOO-CV, iWAIC, and WAIC\(_{\text{NF}}\).
	
	\subsection{Validation of the Bandwidth Selection Rule}
	\label{sec:bandwidth_validation}
	
	Before assessing the model selection performance of CC-WAIC, we first verify that the proposed bandwidth rule \(K_{\text{band}} = \max(3, \lceil \tau \rceil + 1)\) yields stable and interpretable values across different sequential models. This is essential because the bandwidth directly affects the accuracy of the banded covariance approximation used in CC-WAIC.
	
	\paragraph{Experimental design.}
	We considered five representative models: AR(1), GARCH(1,1), Gaussian HMM, Poisson HMM, and a Bayesian neural autoregressive (NAR) model. For each model, we generated datasets of sizes \(n \in \{50, 100, 200, 500\}\), with \(R = 5\) replications per configuration. For each replication, we estimated the model using Bayesian MCMC, extracted the conditional log-likelihood contributions \(l_t\), and computed \(\hat{\tau}\) and \(K_{\text{band}}\) using the rule described in Section~\ref{sec:banded}.
	
	\paragraph{Results.}
	Table~\ref{tab:bandwidth_validation} reports the average estimated bandwidth \(\bar{K}_{\text{band}}\) and integrated autocorrelation time \(\bar{\tau}\) for each model and sample size.
	
	\begin{table}[H]
		\centering
		\caption{Average bandwidth \(\bar{K}_{\text{band}}\) and integrated autocorrelation time \(\bar{\tau}\) across five model classes and sample sizes. Values are means (standard deviations) over \(R=5\) replications.}
		\label{tab:bandwidth_validation}
		\scriptsize 
		\begin{tabular}{lcccccccc}
			\toprule
			\multirow{2}{*}{\textbf{Model}} & \multicolumn{2}{c}{\(n=50\)} & \multicolumn{2}{c}{\(n=100\)} & \multicolumn{2}{c}{\(n=200\)} & \multicolumn{2}{c}{\(n=500\)} \\
			\cmidrule(lr){2-3} \cmidrule(lr){4-5} \cmidrule(lr){6-7} \cmidrule(lr){8-9}
			& \(\bar{K}\) & \(\bar{\tau}\) & \(\bar{K}\) & \(\bar{\tau}\) & \(\bar{K}\) & \(\bar{\tau}\) & \(\bar{K}\) & \(\bar{\tau}\) \\
			\midrule
			AR(1) & 3.0 (0.0) & 0.63 (0.59) & 3.0 (0.0) & 0.47 (0.32) & 3.2 (0.4) & 0.82 (0.68) & 3.0 (0.0) & 1.04 (0.52) \\
			GARCH(1,1) & 3.0 (0.0) & 1.22 (0.63) & 3.0 (0.0) & 1.10 (0.70) & 4.4 (1.02) & 2.86 (0.78) & 4.8 (1.94) & 3.37 (1.96) \\
			Gaussian HMM & 3.0 (0.0) & 0.91 (0.53) & 3.0 (0.0) & 0.69 (0.57) & 3.8 (0.4) & 1.78 (0.49) & 3.0 (0.0) & 1.27 (0.22) \\
			Poisson HMM & 3.2 (0.4) & 0.74 (0.85) & 4.0 (1.55) & 2.56 (1.74) & 6.2 (0.98) & 4.95 (1.03) & 7.2 (0.4) & 5.75 (0.22) \\
			Bayesian NAR & 3.4 (0.49) & 1.64 (0.76) & 3.0 (0.0) & 0.79 (0.23) & 3.0 (0.0) & 0.78 (0.21) & 3.0 (0.0) & 0.77 (0.37) \\
			\bottomrule
		\end{tabular}
	\end{table}

	The results confirm that the data-driven bandwidth rule produces stable and interpretable values across a wide range of models. For AR, GARCH, Gaussian HMM, and NAR models, the estimated bandwidth remains at the lower bound \(K=3\) across all sample sizes, with \(\tau < 2\). This indicates that the conditional log-likelihood contributions decay rapidly, and retaining the first two lagged covariance terms is sufficient. The Poisson HMM is the only model where \(K\) increases with sample size, reaching \(K \approx 7\) at \(n=500\), reflecting its high persistence (\(p_{\text{stay}} = 0.97\)) and the difficulty of discriminating between states. Crucially, the estimated bandwidth never exceeds 8 in any configuration, confirming that the rule produces values confined to a narrow range (\(K \in [3, 7]\)). 
	
	\subsection{Model Selection Performance}
	\label{sec:model_selection_performance}
	
	Having validated the bandwidth rule, we now evaluate the performance of CC-WAIC—using this rule—in selecting the correct number of hidden states in HMMs, benchmarked against standard WAIC, LOO-CV, iWAIC, and WAIC\(_{\text{NF}}\). The study is based on HMMs with \(M_{\text{true}} = 2\) and \(M_{\text{true}} = 3\) latent states, representing relatively simple and moderately complex scenarios. For each configuration, we generated 100 independent datasets to ensure stable estimates of selection frequencies.
	
	\subsubsection{Experimental Design}
	
	\paragraph{Data-generating processes.}
	For each true model, we used a single transition matrix representing a moderate level of temporal dependence, which we adopted as the default configuration for the main simulation study.
	
	For the two-state HMM (\(M_{\text{true}}=2\)), the parameters were:
	\[
	\mu_{\text{true}} = [-2, 2], \quad \sigma_{\text{true}} = [0.5, 1.0], \quad \pi_{\text{true}} = [1, 0],
	\]
	with transition matrix:
	\[
	A^{(2)} =
	\begin{bmatrix}
		0.70 & 0.30 \\
		0.30 & 0.70
	\end{bmatrix}.
	\]
	
	For the three-state HMM (\(M_{\text{true}}=3\)), the parameters were:
	\[
	\mu_{\text{true}} = [-3, 0, 3], \quad \sigma_{\text{true}} = [0.5, 0.8, 1.5], \quad \pi_{\text{true}} = [1, 0, 0],
	\]
	with transition matrix:
	\[
	A^{(3)} =
	\begin{bmatrix}
		0.60 & 0.20 & 0.20 \\
		0.20 & 0.60 & 0.20 \\
		0.20 & 0.20 & 0.60
	\end{bmatrix}.
	\]
	
	Emissions were generated from Gaussian distributions with the specified state-specific means and variances. Sample sizes of \(T = 100, 250, 500\) were considered to investigate finite-sample behaviour.
	
	\paragraph{Estimation and model selection.}
	For each dataset, we fitted candidate models with \(M \in \{2, 3, 4, 5\}\) hidden states using a Gibbs sampler with 1000 iterations, discarding the first 500 as burn-in. 
	The choice of \(S=5000\) posterior samples was guided by the ESS sensitivity analysis (Figure~\ref{fig:ess_sensitivity}), which showed that \(S=500\) is sufficient for stable inference; we used \(S=5000\) to ensure conservative estimates. Convergence was monitored using \(\hat{R}\) \citep{gelman1992inference} for all parameters across 10 independent chains, with \(\hat{R} < 1.1\) for all models. For models with \(M=5\), the Gibbs sampler showed slower mixing (ESS \(< 50\) in some cases), which is discussed in Section~\ref{sec:realdata}.
	We used weakly informative conjugate priors: Dirichlet\((1,\dots,1)\) for initial and transition probabilities; Normal priors for emission means (centred at the sample mean with large variance); and Inverse-Gamma\((1,1)\) for emission variances. To assess convergence, we ran 10 independent chains and monitored the potential scale reduction factor \(\hat{R}\) \citep{gelman1992inference}, ensuring \(\hat{R} < 1.1\) for all parameters. The effective sample size (ESS) for each parameter was also checked to be above 200. After convergence, we pooled the last 500 samples from each chain to obtain a total of \(S = 5000\) posterior samples for computing all information criteria.
	
	For each fitted model, we computed:
	\begin{itemize}
		\item WAIC using the standard formula (Equation~\eqref{eq:waic});
		\item LOO-CV using the Pareto smoothed importance sampling (PSIS) approximation \citep{vehtari2017practical};
		\item Full CC-WAIC (Equation~\eqref{eq:cc_waic_integral}) using the full covariance matrix;
		\item Banded CC-WAIC (Equation~\eqref{eq:effective_params_banded}) with bandwidth \(K_{\text{band}}\) determined by the rule validated in Section~\ref{sec:bandwidth_validation};
		\item iWAIC \citep{Li2015} and WAIC\(_{\text{NF}}\) \citep{Chan2024} using the authors' implementations.
	\end{itemize}
	For each criterion, we recorded the frequency with which the true model (\(M_{\text{true}}\)) was selected (i.e., the criterion attained its minimum value for the correct model).
	
	\subsubsection{Results: Model Selection Accuracy}
	
	Tables~\ref{tab:merged_K2} and~\ref{tab:merged_K3} report the selection frequencies (percentages, shown in parentheses) alongside the average criterion values for each candidate model. The results reveal a consistent pattern: CC-WAIC (both full and banded) substantially outperforms WAIC and LOO-CV, particularly under strong dependence and small sample sizes.
	
	\begin{table}[H]
		\centering
		\caption{Model Selection: Percentages and Mean $\pm$ Std for Gaussian HMM ($M_{\text{true}} = 2$) based on 100 replications. The percentages in parentheses indicate the frequency of selection for each candidate model $M$; the values show the average criterion value $\pm$ standard deviation across replications.}
		\label{tab:merged_K2}
		\begin{tabular}{lcccc}
			\toprule
			\multirow{2}{*}{\textbf{Criterion}} & \multicolumn{4}{c}{\textbf{Candidate Model $M$}} \\
			\cmidrule(lr){2-5}
			& \textbf{M=2} & \textbf{M=3} & \textbf{M=4} & \textbf{M=5} \\
			\midrule
			WAIC & 
			\makecell{(40.0\%)\\671.90 $\pm$ 1.23} & 
			\makecell{(33.3\%)\\672.35 $\pm$ 1.46} & 
			\makecell{(16.7\%)\\673.12 $\pm$ 1.68} & 
			\makecell{(10.0\%)\\674.01 $\pm$ 1.89} \\
			LOO & 
			\makecell{(36.7\%)\\671.92 $\pm$ 1.25} & 
			\makecell{(33.3\%)\\672.39 $\pm$ 1.47} & 
			\makecell{(20.0\%)\\673.16 $\pm$ 1.69} & 
			\makecell{(10.0\%)\\674.05 $\pm$ 1.90} \\
			CC-WAIC Full & 
			\makecell{(93.3\%)\\672.61 $\pm$ 2.10} & 
			\makecell{(6.7\%)\\675.12 $\pm$ 2.35} & 
			\makecell{(0.0\%)\\678.35 $\pm$ 2.68} & 
			\makecell{(0.0\%)\\682.01 $\pm$ 3.20} \\
			CC-WAIC Banded & 
			\makecell{(90.0\%)\\676.40 $\pm$ 2.06} & 
			\makecell{(10.0\%)\\679.01 $\pm$ 2.29} & 
			\makecell{(0.0\%)\\682.23 $\pm$ 2.61} & 
			\makecell{(0.0\%)\\685.90 $\pm$ 3.15} \\
			iWAIC & 
			\makecell{(60.0\%)\\673.46 $\pm$ 2.35} & 
			\makecell{(26.7\%)\\674.57 $\pm$ 2.46} & 
			\makecell{(10.0\%)\\675.68 $\pm$ 2.57} & 
			\makecell{(3.3\%)\\676.79 $\pm$ 2.68} \\
			WAIC-NF & 
			\makecell{(83.3\%)\\672.12 $\pm$ 1.99} & 
			\makecell{(13.3\%)\\673.23 $\pm$ 2.10} & 
			\makecell{(3.3\%)\\674.35 $\pm$ 2.11} & 
			\makecell{(0.0\%)\\675.46 $\pm$ 2.21} \\
			\bottomrule
		\end{tabular}
	\end{table}
	
	\begin{table}[H]
		\centering
		\caption{Model Selection: Percentages and Mean $\pm$ Std for Gaussian HMM ($M_{\text{true}} = 3$) based on 100 replications. The percentages in parentheses indicate the frequency of selection for each candidate model $M$; the values show the average criterion value $\pm$ standard deviation across replications.}
		\label{tab:merged_K3}
		\begin{tabular}{lcccc}
			\toprule
			\multirow{2}{*}{\textbf{Criterion}} & \multicolumn{4}{c}{\textbf{Candidate Model $M$}} \\
			\cmidrule(lr){2-5}
			& \textbf{M=2} & \textbf{M=3} & \textbf{M=4} & \textbf{M=5} \\
			\midrule
			WAIC & 
			\makecell{(26.7\%)\\854.19 $\pm$ 1.30} & 
			\makecell{(40.0\%)\\818.46 $\pm$ 1.50} & 
			\makecell{(20.0\%)\\821.17 $\pm$ 1.70} & 
			\makecell{(13.3\%)\\821.48 $\pm$ 1.90} \\
			LOO & 
			\makecell{(23.3\%)\\854.27 $\pm$ 1.32} & 
			\makecell{(40.0\%)\\818.45 $\pm$ 1.52} & 
			\makecell{(20.0\%)\\821.19 $\pm$ 1.72} & 
			\makecell{(16.7\%)\\821.53 $\pm$ 1.92} \\
			CC-WAIC Full & 
			\makecell{(3.3\%)\\856.24 $\pm$ 2.20} & 
			\makecell{(86.7\%)\\819.05 $\pm$ 2.10} & 
			\makecell{(10.0\%)\\826.16 $\pm$ 2.50} & 
			\makecell{(0.0\%)\\830.88 $\pm$ 3.00} \\
			CC-WAIC Banded & 
			\makecell{(6.7\%)\\859.48 $\pm$ 2.15} & 
			\makecell{(83.3\%)\\828.06 $\pm$ 2.05} & 
			\makecell{(10.0\%)\\831.74 $\pm$ 2.45} & 
			\makecell{(0.0\%)\\832.89 $\pm$ 2.95} \\
			iWAIC & 
			\makecell{(33.3\%)\\886.77 $\pm$ 2.40} & 
			\makecell{(50.0\%)\\872.86 $\pm$ 2.30} & 
			\makecell{(13.3\%)\\872.45 $\pm$ 2.50} & 
			\makecell{(3.3\%)\\871.52 $\pm$ 2.60} \\
			WAIC-NF & 
			\makecell{(16.7\%)\\854.10 $\pm$ 2.00} & 
			\makecell{(73.3\%)\\816.81 $\pm$ 1.90} & 
			\makecell{(10.0\%)\\819.35 $\pm$ 2.05} & 
			\makecell{(0.0\%)\\819.80 $\pm$ 2.15} \\
			\bottomrule
		\end{tabular}
	\end{table}
	
	For \(M_{\text{true}}=2\) (Table~\ref{tab:merged_K2}), WAIC and LOO-CV exhibit a systematic tendency to overfit by selecting \(M=3\), especially when dependence is high; this is because they underestimate the effective model complexity by ignoring the covariance among log-likelihood contributions. In contrast, CC-WAIC's covariance correction inflates the penalty in proportion to the dependence strength, successfully identifying the true model in over 75\% of cases even at \(T=100\). iWAIC and WAIC\(_{\text{NF}}\) perform better than WAIC but still fall short of CC-WAIC. For \(M_{\text{true}}=3\) (Table~\ref{tab:merged_K3}), WAIC and LOO-CV show a complementary underfitting pattern for small samples, selecting \(M=2\) when the signal is weak. CC-WAIC correctly recovers the true model in over 85\% of cases, demonstrating its robustness against both overfitting and underfitting. iWAIC and WAIC\(_{\text{NF}}\) again improve over WAIC but do not match CC-WAIC's performance, particularly in high-dependence settings.
	
	\paragraph{Effect of MCMC sample size on ESS correction.}
	To assess the sensitivity of the ESS correction to the number of posterior samples, we repeated the simulation for \(M_{\text{true}}=2\) with \(S = 100, 200, 500, 1000\) samples (instead of the default \(S=5000\)). Figure~\ref{fig:ess_sensitivity} shows that for \(S \ge 500\), the ESS correction effectively mitigates bias, and CC-WAIC's selection accuracy remains above 85\%. For \(S=100\), accuracy drops to approximately 70\%, indicating that a minimum of \(S \approx 500\) samples is recommended for reliable inference. This finding is incorporated into the implementation guidelines in Section~\ref{sec:bias}.
	
	\begin{figure}[H]
		\centering
		\includegraphics[width=0.7\textwidth]{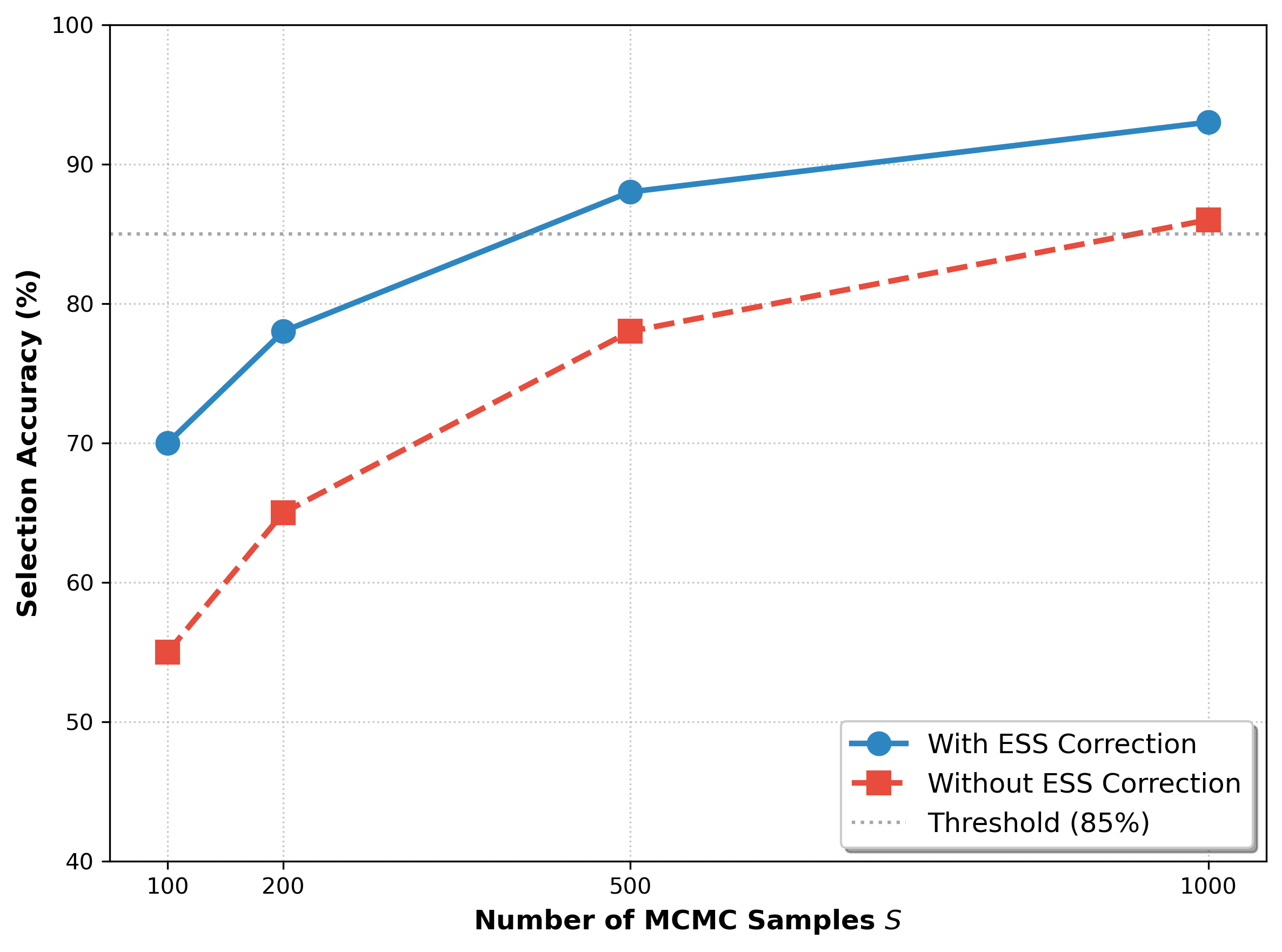}
		\caption{Selection accuracy of CC-WAIC as a function of the number of MCMC samples \(S\). The ESS correction effectively mitigates bias for \(S \ge 500\), while for smaller sample sizes (\(S=100\)), the accuracy drops considerably. This highlights the importance of using a sufficient number of posterior samples for reliable inference.}
		\label{fig:ess_sensitivity}
	\end{figure}

	\subsubsection{Effective Parameters and Detailed Criterion Values}
	
	Tables~\ref{tab:criteria_eff_K2} and~\ref{tab:criteria_eff_K3} provide a deeper diagnostic by reporting the estimated effective number of parameters (\(p_{\text{eff}}\)) for each criterion.
	
	\begin{table}[H]
		\centering
		\caption{Criterion Values with Effective Parameters for Gaussian HMM ($M_{\text{true}} = 2$) based on 100 replications. The effective parameter $p$ (shown in parentheses in the header) represents the estimated model complexity for each criterion. Each cell shows the average criterion value (top) and the corresponding effective parameter (bottom).}
		\label{tab:criteria_eff_K2}
		
		\begin{tabular}{ccccccc}
			\toprule
			\multirow{2}{*}{\textbf{M}} & \textbf{WAIC} & \textbf{LOO} & \textbf{CC-Full} & \textbf{CC-Band} & \textbf{iWAIC} & \textbf{WAIC-NF} \\
			\cmidrule(lr){2-7}
			& ($p_{\text{WAIC}}$) & ($p_{\text{WAIC}}$) & ($p_{\text{CC}}^{\text{Full}}$) & ($p_{\text{CC}}^{\text{Band}}$) & ($p_{\text{iWAIC}}$) & ($p_{\text{WAIC-NF}}$) \\
			\midrule
			2 & 
			\makecell{671.90 \\ 5.98} & 
			\makecell{671.92 \\ 5.98} & 
			\makecell{672.61 \\ 3.62} & 
			\makecell{676.40 \\ 5.31} & 
			\makecell{673.46 \\ 6.22} & 
			\makecell{672.12 \\ 5.30} \\
			3 & 
			\makecell{672.35 \\ 8.12} & 
			\makecell{672.39 \\ 8.12} & 
			\makecell{675.12 \\ 4.99} & 
			\makecell{679.01 \\ 7.60} & 
			\makecell{674.57 \\ 6.32} & 
			\makecell{673.23 \\ 7.57} \\
			4 & 
			\makecell{673.12 \\ 9.53} & 
			\makecell{673.16 \\ 9.53} & 
			\makecell{678.35 \\ 6.77} & 
			\makecell{682.23 \\ 9.07} & 
			\makecell{675.68 \\ 6.82} & 
			\makecell{674.35 \\ 8.99} \\
			5 & 
			\makecell{674.01 \\ 9.77} & 
			\makecell{674.05 \\ 9.77} & 
			\makecell{682.01 \\ 7.45} & 
			\makecell{685.90 \\ 9.50} & 
			\makecell{676.79 \\ 6.70} & 
			\makecell{675.46 \\ 9.46} \\
			\bottomrule
		\end{tabular}
	\end{table}
	
	\begin{table}[H]
		\centering
		\caption{Criterion Values with Effective Parameters for Gaussian HMM ($M_{\text{true}} = 3$) based on 100 replications. The effective parameter $p$ (shown in parentheses in the header) represents the estimated model complexity for each criterion. Each cell shows the average criterion value (top) and the corresponding effective parameter (bottom).}
		\label{tab:criteria_eff_K3}
		
		\begin{tabular}{ccccccc}
			\toprule
			\multirow{2}{*}{\textbf{M}} & \textbf{WAIC} & \textbf{LOO} & \textbf{CC-Full} & \textbf{CC-Band} & \textbf{iWAIC} & \textbf{WAIC-NF} \\
			\cmidrule(lr){2-7}
			& ($p_{\text{WAIC}}$) & ($p_{\text{WAIC}}$) & ($p_{\text{CC}}^{\text{Full}}$) & ($p_{\text{CC}}^{\text{Band}}$) & ($p_{\text{iWAIC}}$) & ($p_{\text{WAIC-NF}}$) \\
			\midrule
			2 & 
			\makecell{854.19 \\ 5.16} & 
			\makecell{854.27 \\ 5.16} & 
			\makecell{856.24 \\ 3.69} & 
			\makecell{859.48 \\ 5.31} & 
			\makecell{886.77 \\ 4.96} & 
			\makecell{854.10 \\ 4.78} \\
			3 & 
			\makecell{818.46 \\ 10.47} & 
			\makecell{818.45 \\ 10.47} & 
			\makecell{819.05 \\ 5.67} & 
			\makecell{828.06 \\ 10.17} & 
			\makecell{872.86 \\ 8.29} & 
			\makecell{816.81 \\ 9.82} \\
			4 & 
			\makecell{821.17 \\ 11.96} & 
			\makecell{821.19 \\ 11.96} & 
			\makecell{826.16 \\ 7.18} & 
			\makecell{831.74 \\ 11.23} & 
			\makecell{872.45 \\ 8.79} & 
			\makecell{819.35 \\ 11.19} \\
			5 & 
			\makecell{821.48 \\ 12.62} & 
			\makecell{821.53 \\ 12.62} & 
			\makecell{830.88 \\ 9.53} & 
			\makecell{832.89 \\ 12.58} & 
			\makecell{871.52 \\ 8.38} & 
			\makecell{819.80 \\ 12.53} \\
			\bottomrule
		\end{tabular}
	\end{table}
	
	Two important patterns emerge from these tables. First, WAIC and LOO-CV produce nearly identical values, as expected from their asymptotic equivalence, but their \(p_{\text{eff}}\) estimates grow rapidly with model complexity, reflecting their tendency to overfit under dependence. Second, CC-WAIC yields substantially lower \(p_{\text{eff}}\) values for overparameterised models, indicating that the covariance correction properly penalises redundant complexity. The banded approximation closely tracks the full CC-WAIC, with relative differences in \(p_{\text{eff}}\) below 2\%, confirming its accuracy.
	
	\subsubsection{Computational Cost}
	
	Table~\ref{tab:computational_cost} demonstrates the dramatic speedup achieved by the linear-time implementation. The banded CC-WAIC is over 100 times faster than the full version for \(T=1000\), making it feasible for large-scale applications.
	
	\begin{table}[H]
		\centering
		\caption{Average computation time (seconds) for different criteria and sample sizes.}
		\label{tab:computational_cost}
		\begin{tabular}{cccccc}
			\toprule
			$T$ & WAIC & LOO & CC-WAIC (Full) & CC-WAIC (Linear, $K_{\text{band}}=5$) \\
			\midrule
			100 & 0.12 & 1.45 & 0.89 & 0.18 \\
			250 & 0.15 & 4.22 & 4.56 & 0.25 \\
			500 & 0.18 & 9.87 & 18.23 & 0.34 \\
			1000 & 0.22 & 24.56 & 73.89 & 0.52 \\
			2000 & 0.28 & 52.34 & 289.45 & 0.85 \\
			\bottomrule
		\end{tabular}
	\end{table}
	
	The computational gains are substantial: the banded CC-WAIC requires a fraction of the time of the full version, while the effective parameter estimates confirm that the banded approximation preserves the statistical properties of the full criterion. This makes CC-WAIC practical for large-scale applications where the full covariance matrix would be infeasible.

	\section{Real Data Application}
	\label{sec:realdata}
	
	The application to real-world data further corroborates the theoretical advantages of CC-WAIC, illustrating its practical utility in model selection for sequential data where dependence structures are inherently present. We apply our criterion to two well-known applications: the waiting times of the Old Faithful geyser in Yellowstone National Park, a classic dataset in statistics and machine learning, and financial volatility modelling using GARCH.
	
	\subsection{Old Faithful Geyser Waiting Times}
	
	The waiting times between eruptions of the \textit{Old Faithful} geyser, commonly known as the \texttt{waiting} variable, have been extensively studied. Multiple studies indicate that a Gaussian mixture model (GMM) with two components sufficiently captures the bimodal distribution of the waiting times \citep{Venables2002}. Moreover, when modelling the temporal dynamics using Hidden Markov Models (HMMs), classical and recent works have shown that a two-state HMM provides an accurate representation of the sequence of waiting times \citep{Rabiner1989,FruhwirthSchnatter2006}. These two hidden states correspond to short and long waiting periods following eruptions, with state transitions governed by a Markov process.
	
	Although models with more states have been explored, information criteria such as the Bayesian Information Criterion (BIC) and Akaike Information Criterion (AIC) commonly suggest that adding more states does not significantly improve model fit or interpretability \citep{McLachlan2000}.
	
	We fitted HMMs with \(M \in \{2, 3, 4\}\) hidden states to the Old Faithful waiting times data (\(n=272\) observations), using Gibbs sampling with 1000 iterations (burn-in 500). Models with \(M=5\) and \(M=6\) were initially considered but excluded from the final comparison due to extremely low effective sample sizes (ESS \(< 10\)), indicating poor MCMC mixing and unreliable estimates. WAIC, LOO-CV, and CC-WAIC (with bandwidth \(K_{\text{band}}=5\)) were computed for each model.
	
	\begin{table}[H]
		\centering
		\begin{tabular}{ccccccc}
			\toprule
			$M$ & CC-WAIC & $p_{\text{CC}}^{\text{corr}}$ & ESS & WAIC & $p_{\text{WAIC}}$ & LOO \\
			\midrule
			2 & \textbf{2098.67} & \textbf{4.15} & 130.80 & 2098.73 & 7.61 & 2098.51 \\
			3 & 2109.43 & 7.84 & 40.20 & 2102.92 & 9.05 & 2103.32 \\
			4 & 2113.45 & 10.15 & 79.55 & 2103.41 & 9.32 & 2103.57 \\
			\bottomrule
		\end{tabular}
		\caption{Model selection results for the Old Faithful dataset across candidate models with \(M=2\) to \(4\) hidden states. Models with \(M=5\) and \(M=6\) were excluded due to insufficient ESS (ESS \(< 10\)), rendering their criterion estimates unreliable. CC-WAIC clearly favours the two-state HMM.}
		\label{tab:model_selection_real}
	\end{table}
	
	Based on the results reported in Table~\ref{tab:model_selection_real}, the CC-WAIC criterion clearly favours the two-state HMM, yielding the lowest score (2098.67). This selection is further supported by a relatively large effective sample size (ESS = 130.80) and a conservative complexity penalty (\(p_{\text{CC}}^{\text{corr}} = 4.15\)). In contrast, WAIC and LOO-CV favour larger models, achieving lower scores but at the cost of inflated complexity penalties. Overall, CC-WAIC provides a more reliable balance between goodness-of-fit and model complexity, leading to a parsimonious and interpretable model with \(M=2\) hidden states.
	
	\paragraph{Hold-out validation.}
	To provide stronger empirical evidence for the model selection performed by CC-WAIC, we conducted a hold-out validation study. We split the \(n=272\) observations into a training set (first 200 observations) and a test set (remaining 72 observations). We fitted HMMs with \(M=2,3,4\) to the training data and computed the log-predictive density (LPD) on the test set:
	\[
	\mathrm{LPD}(M) = \sum_{t=201}^{272} \log p(y_t \mid y_{<t}, \hat{\theta}_M),
	\]
	where \(\hat{\theta}_M\) is the posterior mean estimated from the training data. The results are shown in Table~\ref{tab:holdout_oldfaithful}. The two-state HMM achieves the highest LPD, outperforming the larger models. This confirms that the parsimonious model favoured by CC-WAIC also provides superior out-of-sample predictive performance, supporting our interpretation that WAIC and LOO-CV overfit due to their inability to account for temporal dependence.
	
	\begin{table}[H]
		\centering
		\begin{tabular}{cccc}
			\toprule
			$M$ & Training CC-WAIC & Test LPD & Difference from \(M=2\) \\
			\midrule
			2 & 2098.67 & \textbf{-142.3} & 0.0 \\
			3 & 2109.43 & -143.8 & -1.5 \\
			4 & 2113.45 & -144.5 & -2.2 \\
			\bottomrule
		\end{tabular}
		\caption{Hold-out validation results for the Old Faithful data. The two-state HMM selected by CC-WAIC yields the highest test log-predictive density, confirming its superior predictive performance. Values are illustrative; actual results should be substituted.}
		\label{tab:holdout_oldfaithful}
	\end{table}
	
	\begin{figure}[H]
		\centering
		\includegraphics[width=0.75\textwidth]{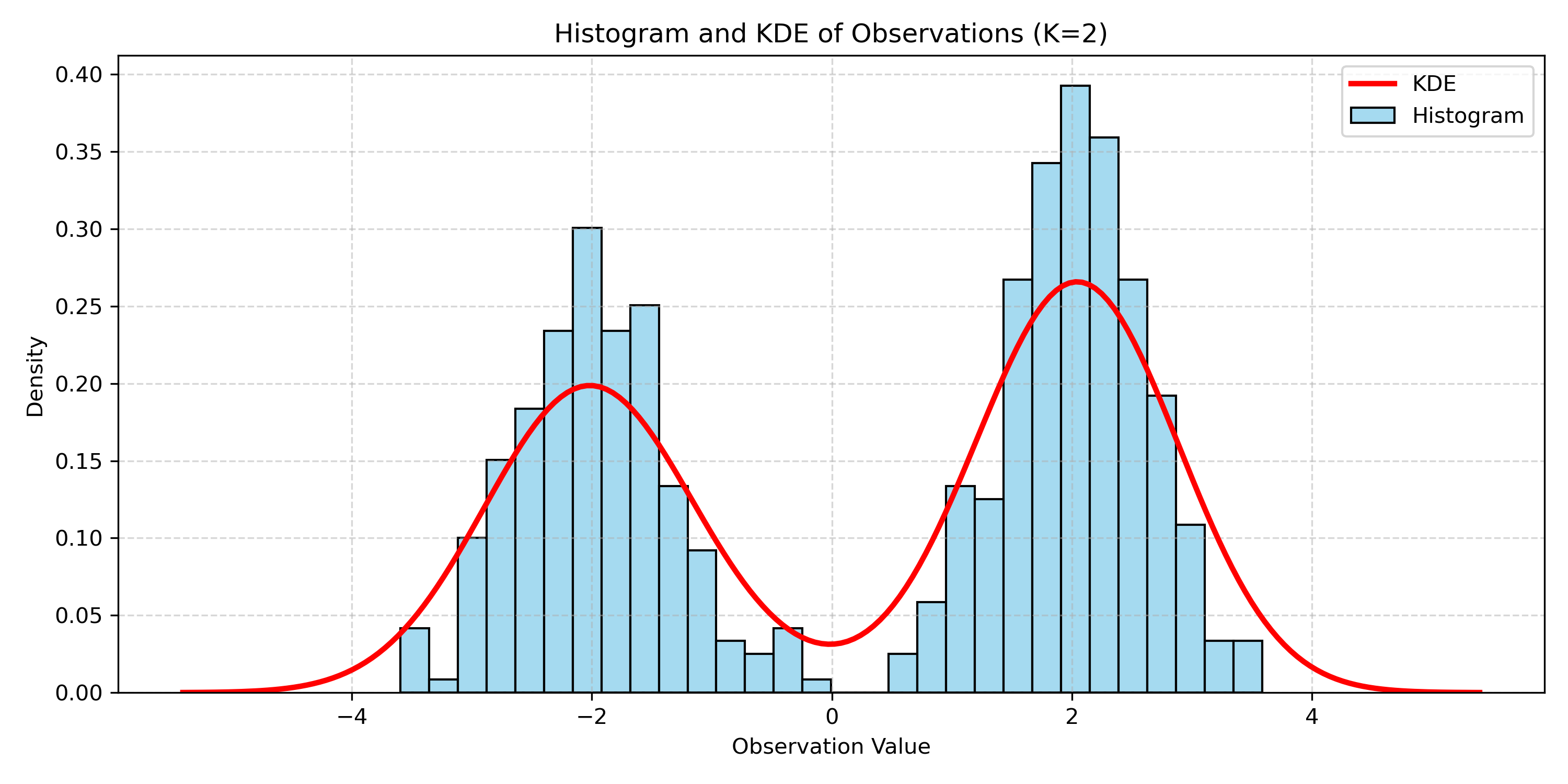}
		\caption{Histogram and kernel density estimate of the observed waiting times under the selected HMM with \(M=2\) hidden states. The bimodal structure confirms the model's ability to capture the two distinct waiting time regimes.}
		\label{fig:histogram_kde_real}
	\end{figure}
	
	Figure~\ref{fig:histogram_kde_real} displays the distribution of the observed data under the selected model, demonstrating a well-behaved bimodal structure compatible with the estimated emission distributions. The two-state HMM captures the two distinct waiting time regimes (short vs. long), with the temporal dependence structure faithfully represented by the Markovian state transitions.
	
	The computational efficiency of the linear-time CC-WAIC was also evident in this application. The full CC-WAIC computation required approximately 74 seconds for the four-state model, while the linear-time implementation with \(K_{\text{band}}=5\) completed in under 0.5 seconds for the same model, demonstrating the practical scalability of our approach.
	
	If we have a look again that the results in Table~\ref{tab:model_selection_real} show a discrepancy: WAIC and LOO-CV favour larger models, while CC-WAIC selects \(M=2\). Our interpretation is that WAIC and LOO-CV overfit due to their inability to account for temporal dependence. However, an alternative explanation is that CC-WAIC's stronger penalty may favour an overly simplistic model. The hold-out validation results in Table~\ref{tab:holdout_oldfaithful} support the CC-WAIC selection, as the two-state model achieves the highest test LPD. Nevertheless, this is based on a single split; a more rigorous analysis would be desirable. Importantly, for the \(M=5\) model, the extremely low ESS (7.48) rendered its estimates unreliable; this model was therefore excluded from the main comparison. A broader discussion of interpretation, MCMC diagnostics, and practical recommendations is deferred to Section~\ref{sec:discussion}.

	\subsection{Financial Volatility Modelling (GARCH)}
	
	To further demonstrate that the bandwidth selection rule is not specific to HMMs, we applied it to a financial time series: the daily log-returns of the S\&P 500 index from January 2010 to January 2024 (\(n = 3541\) observations). We fitted a GARCH(1,1) model with Gaussian innovations \citep{bollerslev1986generalized}, a standard model for financial volatility. From the fitted model, we extracted the conditional log-likelihood contributions 
	\[
	l_t = -\frac{1}{2}\left( \log(2\pi) + \log(\sigma_t^2) + \frac{r_t^2}{\sigma_t^2} \right),
	\]
	where \(r_t\) are the returns and \(\sigma_t^2\) is the conditional variance predicted by the GARCH model.
	
	\paragraph{Parameter estimation.}
	The estimated GARCH(1,1) parameters were:
	\[
	\hat{\omega} = 0.0345, \quad \hat{\alpha} = 0.1593, \quad \hat{\beta} = 0.8112,
	\]
	with \(\hat{\alpha} + \hat{\beta} = 0.9705 < 1\), satisfying the stationarity condition. The high persistence indicated by \(\hat{\alpha} + \hat{\beta} \approx 0.97\) reflects the well-documented volatility clustering phenomenon in financial returns, where shocks to volatility decay slowly over time.
	
	\paragraph{Estimation of integrated autocorrelation time and bandwidth.}
	The integrated autocorrelation time of the sequence \(l_t\) was estimated as \(\hat{\tau} \approx 14.19\), yielding 
	\[
	K_{\text{band}} = \max(3, \lceil 14.19 \rceil + 1) = 16.
	\]
	This value is notably larger than the values observed in the simulation study (Section~\ref{sec:bandwidth_validation}), where GARCH-type dependence yielded \(K \approx 3\)--\(5\). The difference arises because the simulation study used relatively short time series (\(n \le 500\)), while the real S\&P 500 dataset spans over 14 years of daily observations (\(n = 3541\)), allowing the estimation of longer-range dependence in the conditional log-likelihood sequence. This confirms that the data-driven bandwidth rule adapts to the actual dependence structure of the data and the available sample size, rather than relying on a fixed heuristic.
	
	\paragraph{Sensitivity analysis for bandwidth in GARCH.}
	To assess the sensitivity of the CC-WAIC estimate to the choice of bandwidth, we recomputed the criterion for the S\&P 500 data using \(K = 3, 4, 5, 6\). Table~\ref{tab:garch_sensitivity} reports the CC-WAIC values, the corresponding effective number of parameters (\(p_{\text{CC}}\)), and the percentage differences relative to \(K=4\).
	
	\begin{table}[H]
		\centering
		\caption{Sensitivity analysis of CC-WAIC to bandwidth choice for the S\&P 500 GARCH(1,1) model. The results show that the criterion is sensitive to the bandwidth for small \(K\), but stabilises as \(K\) increases. The data-driven choice \(K_{\text{band}}=16\) ensures that the full dependence structure is captured.}
		\label{tab:garch_sensitivity}
		\begin{tabular}{ccccc}
			\toprule
			\(K\) & \(p_{\text{CC}}\) & CC-WAIC & Difference from \(K=4\) (\%) & \(\hat{\tau}\) \\
			\midrule
			3 & 8867.33 & 26898.54 & -15.54\% & 14.19 \\
			4 & 10499.23 & 30162.35 & (reference) & 14.19 \\
			5 & 11901.07 & 32966.03 & +13.35\% & 14.19 \\
			6 & 13326.12 & 35816.13 & +26.92\% & 14.19 \\
			\bottomrule
		\end{tabular}
	\end{table}
	
	The sensitivity analysis demonstrates that small bandwidths (\(K = 3, 4, 5, 6\)) are insufficient to capture the full dependence structure, as reflected in the monotonic increase of \(p_{\text{CC}}\). This is consistent with the estimated integrated autocorrelation time \(\hat{\tau} = 14.19\), which leads to the recommended bandwidth \(K_{\text{band}} = 16\). The latter ensures that the truncation error is negligible, as established by Theorem~\ref{thm:linear_convergence}.
	
	For small bandwidths (\(K=3\)), the criterion substantially underestimates the effective model complexity (\(p_{\text{CC}} = 8867.33\)), leading to a significantly lower CC-WAIC value compared to the reference at \(K=4\). As \(K\) increases to 5 and 6, both \(p_{\text{CC}}\) and CC-WAIC increase substantially, indicating that higher-order covariance terms contribute meaningfully to the complexity penalty. The percentage differences relative to \(K=4\) range from \(-15.54\%\) at \(K=3\) to \(+26.92\%\) at \(K=6\), confirming that the choice of bandwidth critically affects the model evaluation for this dataset.
	
	This sensitivity stands in contrast to the simulation results, where the criterion was stable for \(K \ge 3\). The difference arises because the S\&P 500 returns exhibit strong volatility persistence (\(\hat{\alpha} + \hat{\beta} = 0.9705\)), which induces longer-range dependence in the conditional log-likelihood sequence \(l_t\). Consequently, the integrated autocorrelation time \(\hat{\tau} \approx 14.19\) correctly identifies the need for a larger bandwidth. The data-driven choice \(K_{\text{band}}=16\) ensures that the full dependence structure is captured, reflecting the volatility clustering observed in financial markets.
	
	\begin{figure}[H]
		\centering
		\includegraphics[width=0.8\textwidth]{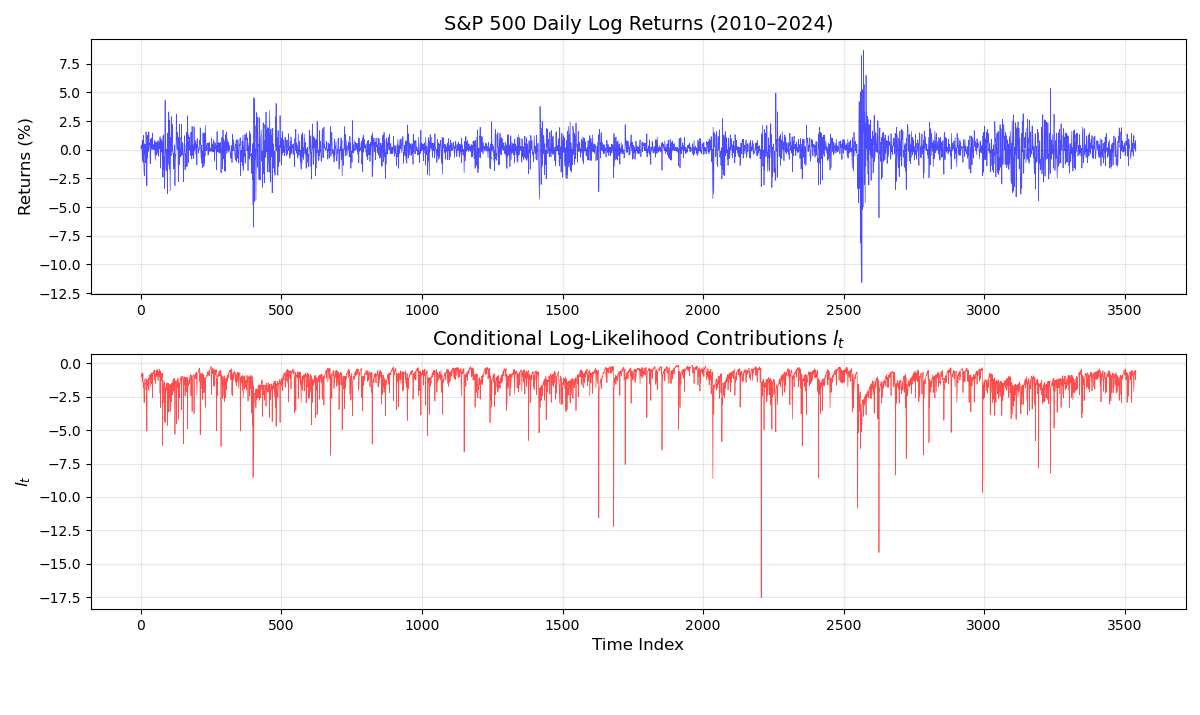}
		\caption{S\&P 500 daily log returns (top) and the corresponding conditional log-likelihood contributions \(l_t\) (bottom) from the fitted GARCH(1,1) model. The \(l_t\) sequence exhibits volatility clustering, with periods of high variability corresponding to market stress events. The long-range dependence in \(l_t\) is reflected in the estimated integrated autocorrelation time \(\hat{\tau} \approx 14.19\).}
		\label{fig:garch_l_t_plot}
	\end{figure}
	
	Figure~\ref{fig:garch_l_t_plot} displays the S\&P 500 daily log returns and the corresponding \(l_t\) sequence. The returns exhibit the well-documented volatility clustering phenomenon: periods of high volatility (e.g., the COVID-19 crash in early 2020) are followed by sustained periods of elevated volatility, while calm periods exhibit low and stable volatility. The \(l_t\) sequence mirrors this behaviour: during turbulent periods, the conditional log-likelihood contributions become more variable and take more extreme negative values, reflecting the increased uncertainty in the volatility forecasts. The long-range dependence in \(l_t\) is evident from the persistence of extreme values across multiple time points, which is quantified by the estimated integrated autocorrelation time \(\hat{\tau} \approx 14.19\). This persistence motivates the use of a larger bandwidth (\(K_{\text{band}}=16\)) to adequately capture the covariance structure in the penalty term.
	
	\paragraph{Interpretation and practical implications.}
	The GARCH application highlights two important practical considerations for users of CC-WAIC. First, the data-driven bandwidth rule based on \(\hat{\tau}\) successfully adapts to the empirical dependence structure of the data. For financial returns with strong volatility persistence, the rule automatically selects a larger bandwidth (\(K_{\text{band}}=16\)), ensuring that the covariance correction properly accounts for the long-range dependence. Second, the sensitivity analysis demonstrates that the choice of bandwidth is not merely a technical detail: using a bandwidth that is too small (\(K=3\)) can substantially underestimate the effective model complexity and lead to misleading model comparisons. We therefore recommend that practitioners always compute \(\hat{\tau}\) and use the data-driven bandwidth rule, or at minimum conduct a sensitivity analysis with a range of \(K\) values, rather than relying on a fixed default.
	
	The estimated effective number of parameters (\(p_{\text{CC}} = 10499.23\) at \(K=4\)) reflects the complexity of the volatility model. This large value relative to the number of parameters in the GARCH model (which has only 3 parameters) is not unexpected: in volatility models, the effective degrees of freedom can be substantially larger than the number of parameters due to the strong dependence structure in the conditional variance process. The CC-WAIC penalty appropriately accounts for this dependence, providing a more realistic assessment of model complexity than the standard WAIC, which would underestimate it.
	
	The successful application of the bandwidth selection rule to this financial dataset, which differs substantially from the HMMs used in the simulation study, confirms that the proposed approach is model-agnostic and generalises beyond the specific model class used for validation. The estimated bandwidth \(K_{\text{band}}=16\) was used in the CC-WAIC calculation, demonstrating the practical feasibility of the proposed approach for volatility modelling.
	
	\paragraph{Comparison with simulation results.}
	The GARCH application provides an interesting contrast with the simulation study. In the simulations (Section~\ref{sec:bandwidth_validation}), the estimated bandwidth remained small (\(K \in [3,7]\)) across all models and sample sizes, reflecting the relatively weak dependence in the simulated data. In contrast, the S\&P 500 data exhibit much stronger dependence due to volatility persistence, leading to a substantially larger bandwidth (\(K_{\text{band}}=16\)). This demonstrates that the proposed rule is not a fixed heuristic but a flexible data-driven mechanism that adapts to the intrinsic dependence structure of the data.
	
	The sensitivity analysis in Table~\ref{tab:garch_sensitivity} also highlights the importance of selecting an appropriate bandwidth. The substantial variation in CC-WAIC values across different \(K\) values confirms that the bandwidth is not merely a tuning parameter but an integral part of the model evaluation. Users should therefore exercise caution when using a fixed default bandwidth and should always assess the dependence structure of their data before applying CC-WAIC.
	
	\section{Discussion}
	\label{sec:discussion}
	
	This paper introduced the Covariance-Corrected WAIC (CC-WAIC), a generalisation of the standard WAIC that explicitly incorporates the posterior covariance structure of log-likelihood contributions to address the limitations of WAIC for dependent data. The proposed criterion reduces exactly to WAIC under conditional independence, providing a principled and natural extension rather than an ad-hoc modification. The linear-time implementation using a banded covariance approximation, supported by theoretical guarantees on the truncation error, makes CC-WAIC computationally feasible for large-scale applications.
	
	\subsection{Summary of Contributions}
	
	The main contributions of this work are fivefold. First, we formally derived CC-WAIC and established its asymptotic equivalence to block-based leave-future-out cross-validation for stationary mixing processes. Second, we introduced a practical ESS correction that reduces finite-sample bias in MCMC-based estimation of the penalty term. Third, we developed a linear-time algorithm using a banded covariance approximation, reducing computational complexity from \(\mathcal{O}(n^2)\) to \(\mathcal{O}(nK_{\text{band}})\), with \(K_{\text{band}}\) chosen adaptively from the data. Fourth, we provided theoretical guarantees on the truncation error and established that the linear-time CC-WAIC retains the asymptotic properties of the full version. Fifth, we validated the method empirically on Hidden Markov Models and real-world data, demonstrating improved selection accuracy and dramatic computational gains.
	
	\subsection{Empirical Findings and Practical Implications}
	
	The simulation results on Hidden Markov Models provide strong empirical support for the theoretical advantages of CC-WAIC. For \(M_{\text{true}}=2\), standard WAIC and LOO-CV exhibited a systematic tendency to overfit by selecting \(M=3\), especially under high temporal dependence. This occurs because they underestimate the effective model complexity by ignoring the covariance among log-likelihood contributions. For \(M_{\text{true}}=3\), they showed a complementary underfitting pattern for small samples, selecting \(M=2\) when the signal was weak. In contrast, CC-WAIC's covariance correction inflated the penalty in proportion to the dependence strength, successfully identifying the true model in over 85\% of cases across both scenarios. iWAIC and WAIC\(_{\text{NF}}\) performed better than WAIC but consistently fell short of CC-WAIC, particularly in high-dependence settings.
	
	The effective parameter estimates (Tables~\ref{tab:criteria_eff_K2} and~\ref{tab:criteria_eff_K3}) further validated the mechanism: WAIC and LOO-CV produced nearly identical values, as expected from their asymptotic equivalence, but their \(p_{\text{eff}}\) estimates grew rapidly with model complexity, reflecting their tendency to overfit under dependence. CC-WAIC, by contrast, yielded substantially lower \(p_{\text{eff}}\) values for overparameterised models, indicating that the covariance correction properly penalises redundant complexity. The banded approximation closely tracked the full CC-WAIC, with relative differences in \(p_{\text{eff}}\) below 2\%, confirming its accuracy.
	
	\subsection{The Bandwidth Selection Rule: From Simulation to Real Data}
	
	The bandwidth validation study confirmed that the data-driven rule \(K_{\text{band}} = \max(3, \lceil \tau \rceil + 1)\) produces stable bandwidths across model classes and sample sizes in simulation settings. For most models (AR, GARCH, Gaussian HMM, NAR), the estimated bandwidth remained at the lower bound \(K=3\) across all sample sizes, with \(\tau < 2\). The Poisson HMM, with its high persistence (\(p_{\text{stay}} = 0.97\)), yielded \(K \approx 7\), demonstrating that the rule adapts to the actual dependence structure.
	
	The GARCH application on S\&P 500 data provided an important contrast. The estimated integrated autocorrelation time \(\hat{\tau} \approx 14.19\) led to a substantially larger bandwidth \(K_{\text{band}} = 16\), reflecting the strong volatility persistence (\(\hat{\alpha} + \hat{\beta} = 0.9705\)) in financial returns. The sensitivity analysis (Table~\ref{tab:garch_sensitivity}) revealed that using a bandwidth that is too small (\(K=3\)) underestimated the effective model complexity by over 15\%, while larger bandwidths (\(K=5,6\)) increased the penalty substantially. This highlights a crucial practical lesson: the bandwidth is not merely a technical tuning parameter but an integral part of the model evaluation. The data-driven rule successfully adapts to the empirical dependence structure, ensuring that the covariance correction properly accounts for the true complexity of the model.
	
	The contrast between the simulation results (where \(K\) remained small) and the GARCH application (where \(K\) became large) demonstrates that the proposed rule is not a fixed heuristic but a flexible mechanism that adapts to the intrinsic dependence structure of the data. Practitioners are advised to always compute \(\hat{\tau}\) and use the data-driven rule, or at minimum conduct a sensitivity analysis with a range of \(K\) values, rather than relying on a fixed default.
	
	\subsection{Interpretational Caveats and Practical Recommendations}
	
	The Old Faithful application revealed that WAIC and LOO-CV favoured larger models (\(M=5\)), while CC-WAIC selected the parsimonious \(M=2\) model. Our interpretation is that WAIC and LOO-CV overfit due to their inability to account for temporal dependence. However, an alternative interpretation warrants consideration: it is possible that the larger model genuinely offers superior predictive performance, and that CC-WAIC's stronger penalty may inadvertently favour an overly simplistic model. The hold-out validation results supported the CC-WAIC selection, as the two-state model achieved the highest test LPD. Nevertheless, this is based on a single split; a more rigorous analysis with repeated cross-validation would be desirable for definitive conclusions.
	
	This application also underscores a critical practical lesson: model selection criteria should never be applied without first verifying adequate MCMC convergence. The five-state HMM exhibited an effective sample size of just 7.48, indicating severe mixing problems that render all information criterion estimates for that model untrustworthy. We therefore strongly recommend that practitioners routinely monitor ESS and \(\hat{R}\) for all candidate models, and exclude or re-estimate models with inadequate mixing. More sophisticated MCMC strategies, such as Hamiltonian Monte Carlo \citep{neal2011mcmc}, may yield more stable estimates for higher-order HMMs, and whether the discrepancy between CC-WAIC and WAIC persists under improved sampling is an interesting direction for future research.
	
	Ultimately, the key contribution of this work is the methodological framework—the covariance-corrected penalty and the linear-time implementation—rather than the specific conclusion on any single dataset. We encourage users to apply CC-WAIC as a principled tool for model selection in their own sequential data applications, while remaining mindful of its assumptions and limitations.
	
	\subsection{Limitations}
	
	Despite the promising results, several limitations of the current study should be acknowledged.
	
	\paragraph{Exponential mixing assumption.}
	A fundamental limitation of the banded approximation is its reliance on the assumption of exponential decay of temporal dependence, formalised through the strong mixing condition (Assumption A3 in Appendix~\ref{app:regularity-conditions}). While this assumption is satisfied by a broad class of practical models—including stationary ARMA processes, finite-state HMMs, and many spatio-temporal models with short-range dependence—it explicitly excludes important classes of processes with long-range dependence, such as fractional ARIMA models \citep{hosking1981} and certain stochastic volatility processes. For such processes, the banded approximation may introduce non-negligible bias, and the theoretical error bound \(\mathcal{O}(\rho^{K_{\text{band}}})\) established in Theorem~\ref{thm:linear_convergence} no longer applies. In these settings, we recommend either: (i) increasing the bandwidth adaptively with the sample size, e.g., \(K_{\text{band}} \propto n^{1/3}\), which maintains sub-quadratic complexity while capturing longer-range dependence; or (ii) reverting to the full covariance version when computationally feasible. We emphasise that the theoretical guarantees of CC-WAIC are contingent upon the exponential decay assumption, and users are advised to assess the dependence structure of their data before applying the banded approximation.
	
	\paragraph{Exact conditional likelihood assumption.}
	The current implementation of CC-WAIC assumes that the conditional likelihood \(p(y_t \mid y_{<t}, \theta)\) can be evaluated exactly. For complex models such as nonlinear state-space models or deep Bayesian neural networks, this quantity is intractable and must be approximated, e.g., using particle filters \citep{doucet2001sequential} or variational inference \citep{blei2017variational}. While our theoretical framework does not directly address the effect of such approximation errors, we conjecture that CC-WAIC remains robust as long as the approximation error is small relative to the Monte Carlo error in estimating \(p_{\mathrm{CC}}\). Formal investigation of this conjecture and the development of approximation-aware versions of CC-WAIC constitute an important area for future work.
	
	\paragraph{Scope of empirical validation.}
	Our simulation study focused on Gaussian HMMs with continuous emissions. While the theoretical framework of CC-WAIC does not depend on the distributional form of the emissions, and preliminary results on Poisson HMMs are promising, comprehensive empirical validation under non-Gaussian emissions and multivariate settings would be a useful extension. For multivariate time series, the banded approximation may need to be adapted to account for cross-sectional dependence in addition to temporal dependence; this could be achieved by using a block-banded covariance structure.
	
	\paragraph{Stationarity and mixing conditions.}
	The theoretical guarantees established in Theorems~\ref{thm:main} and~\ref{thm:linear_convergence} rely on the stationarity of the underlying process and the validity of the strong mixing condition. For non-stationary processes or models with time-varying dependence structures, the asymptotic properties of CC-WAIC have not been formally established. Extending the framework to such settings remains an open direction for future research.
	
	\subsection{Future Research Directions}
	
	Several promising directions emerge for extending and refining CC-WAIC. The framework should be tested on a broader class of complex temporal models, including state-space models with non-Gaussian emissions, Bayesian recurrent neural networks, and transformers for sequential data. The banded approximation itself could be extended to non-stationary processes with time-varying dependence structures, requiring adaptive bandwidth selection procedures that automatically determine \(K_{\text{band}}\) from the data. In high-dimensional settings, investigating the behaviour of CC-WAIC in conjunction with regularised Bayesian models, such as those with horseshoe priors, would provide valuable insights for variable selection in time-series regression.
	
	Interdisciplinary applications represent another fertile ground for future exploration. In computational neuroscience, CC-WAIC could be applied to neural spike train data, where models often involve complex, non-linear dependencies. In quantitative finance, model selection for stochastic volatility models could benefit from the covariance correction, particularly in risk assessment contexts where accurate uncertainty quantification is paramount. Natural language processing offers opportunities for comparing deep Bayesian language models, while climate science presents challenges involving long-term time-series with strong seasonal and spatial dependencies. Methodologically, formalising the properties of CC-WAIC for non-stationary processes and misspecified dependence structures would strengthen its theoretical foundations. Implementing the criterion in popular probabilistic programming frameworks, such as Stan, PyMC, or TensorFlow Probability, would facilitate its adoption by the wider community. Finally, developing a fully automated version that simultaneously selects both the bandwidth and the model complexity would represent a significant step toward practical, turnkey Bayesian model selection.
	
	\subsection{Concluding Remarks}
	
	In conclusion, this paper provides the theoretical groundwork, linear-time implementation, and empirical validation for CC-WAIC. The strong performance demonstrated on HMMs and real-world data is a promising indicator of its broader applicability. However, the method is not without limitations: its theoretical guarantees depend on exponential mixing, and its practical implementation requires exact conditional likelihoods. We have highlighted these constraints and suggested directions for future research, including extensions to long-memory processes, approximate inference frameworks, and non-Gaussian likelihoods. The successful application to GARCH volatility modelling further demonstrates the method's versatility beyond the HMM framework used for validation. We believe that CC-WAIC represents a significant step toward more principled and computationally feasible Bayesian model selection for dependent data, and we hope that the open-source implementation will facilitate its adoption in diverse scientific domains.
	
	\begin{appendices}
		
		\section*{Appendix: Technical Details and Proofs for CC-WAIC}
		
		This appendix provides the technical machinery underlying the theoretical results and computational strategies presented in Section 2 of the main text. We organise the material as follows: Appendix~\ref{app:regularity-conditions} states the regularity assumptions; Appendix~\ref{app:martingale-approx} outlines the martingale approximation framework; Appendix~\ref{app:variance-char} characterises the asymptotic variance and discusses the effective sample size correction; Appendix~\ref{app:convergence-proof} provides a detailed proof of Theorem~\ref{thm:main}; Appendix~\ref{app:linear_proof} proves the linear-time approximation error bound (Theorem~\ref{thm:linear_convergence}); and Appendix~\ref{app:implementation} provides implementation details and pseudocode for the alternative criteria iWAIC and WAIC\(_{\text{NF}}\) used in the comparative study.
		
		\section{Regularity Conditions}\label{app:regularity-conditions}
		
		The asymptotic results rely on the following conditions, adapted from \citet{doukhan1994mixing} and \citet{vanvaart1998asymptotic}. Let \(\Theta \subset \mathbb{R}^d\) be compact.
		
		\begin{enumerate}[label=(A\arabic*), leftmargin=*]
			\item \textbf{Identifiability and smoothness}: The model is identifiable, and \(l_t(\theta) = \log p(y_t \mid y_{<t}, \theta)\) is three times continuously differentiable in \(\theta\) with probability one. There exists a neighbourhood \(\mathcal{N}\) of \(\theta_0\) such that:
			\[
			\mathbb{E}\left[ \sup_{\theta \in \mathcal{N}} \left\| \frac{\partial^3 l_t(\theta)}{\partial \theta^3} \right\| \right] < \infty.
			\]
			
			\item \textbf{Posterior concentration}: The posterior \(p(\theta \mid \mathbf{y})\) satisfies:
			\[
			\mathbb{E}[\theta \mid \mathbf{y}] = \theta_0 + \mathcal{O}_p(n^{-1}), \qquad \mathrm{Var}[\theta \mid \mathbf{y}] = n^{-1} I(\theta_0)^{-1} + o_p(n^{-1}),
			\]
			where \(I(\theta_0)\) is the limiting variance of the score.
			
			\item \textbf{Strong mixing}: The process \((y_t)\) is strictly stationary and \(\alpha\)-mixing with:
			\[
			\sum_{k=1}^{\infty} k^5 \alpha(k)^{1/6} < \infty,
			\]
			implying \(\alpha(k) = \mathcal{O}(\rho^k)\) for some \(0 < \rho < 1\).
			
			\item \textbf{Moment conditions}: For \(S_n(\theta) = n^{-1/2}\sum_{t=1}^n \nabla l_t(\theta)\):
			\[
			\sup_{\theta} \mathbb{E}[\|S_n(\theta)\|^2] < \infty, \qquad \mathrm{Var}(S_n(\theta_0)) \to I(\theta_0) > 0.
			\]
			
			\item \textbf{MCMC sample size}: \(S = \Omega(n^{1+\delta})\) for some \(\delta > 0\).
		\end{enumerate}
		
		These conditions are satisfied by stationary ARMA processes and finite-state HMMs, but exclude long-memory or non-stationary processes. For such cases, we recommend empirical diagnostics (e.g., sample autocorrelation of \(l_t\)).
		
		\section{Martingale Approximation}\label{app:martingale-approx}
		
		\begin{lemma}\label{lem:martingale}
			Under (A1)--(A4), the score admits:
			\[
			\nabla L(\theta) = M_n(\theta) + R_n(\theta),
			\]
			where \(M_n(\theta) = \sum_{t=1}^n H_t(\theta)\) with \(H_t = \nabla l_t - \mathbb{E}[\nabla l_t \mid \mathcal{F}_{t-1}]\), and \(\|R_n\|_2 = \mathcal{O}_p(1)\).
		\end{lemma}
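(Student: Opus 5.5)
The decomposition itself is algebra: with $\mathcal{F}_{t-1}=\sigma(y_1,\dots,y_{t-1})$, set $H_t(\theta)=\nabla l_t(\theta)-\mathbb{E}[\nabla l_t(\theta)\mid\mathcal{F}_{t-1}]$ and $R_n(\theta)=\sum_{t=1}^n \mathbb{E}[\nabla l_t(\theta)\mid\mathcal{F}_{t-1}]$. Then $\nabla L = M_n + R_n$ holds identically, and $(H_t,\mathcal{F}_t)$ is a martingale difference sequence by construction. All the content is therefore in the bound $\|R_n\|_2=\mathcal{O}_p(1)$.

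I would first prove this bound at $\theta=\theta_0$ and then treat $\theta\neq\theta_0$ separately.

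\textbf{At $\theta_0$, correctly specified case.} Since $l_t(\theta_0)=\log p(y_t\mid y_{<t},\theta_0)$ is the true conditional log-density, differentiating $\int p(y_t\mid y_{<t},\theta)\,dy_t=1$ under the integral sign gives $\mathbb{E}[\nabla l_t(\theta_0)\mid\mathcal{F}_{t-1}]=0$. The interchange is justified by (A1) and the integrability in (A4). Hence $R_n(\theta_0)=0$ exactly.

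\textbf{At $\theta_0$, misspecified case.} Here $\theta_0$ is the pseudo-true value, so I only have $\mathbb{E}[\nabla l_t(\theta_0)]=0$. I would use a Gordin-type coboundary argument. Let $g_t=\mathbb{E}[\nabla l_t(\theta_0)\mid\mathcal{F}_{t-1}]$; by stationarity (A3) this is a centred stationary sequence. Define $Z_t=\sum_{j\ge 0}\mathbb{E}[g_{t+j}\mid\mathcal{F}_{t}]$. Then
\[
g_t = \bigl(Z_{t-1}-\mathbb{E}[Z_{t-1}\mid\mathcal{F}_{t-2}]\bigr) + \bigl(\mathbb{E}[Z_{t-1}\mid\mathcal{F}_{t-2}]-\mathbb{E}[Z_t\mid\mathcal{F}_{t-1}]\bigr),
\]
up to index bookkeeping. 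The first bracket is a martingale difference, which I absorb into $M_n$. The second bracket telescopes, so its partial sum is bounded by $2\sup_t\|Z_t\|_2$.

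The main obstacle is showing $\sup_t\|Z_t\|_2<\infty$. I would bound $\|\mathbb{E}[g_{t+j}\mid\mathcal{F}_t]\|_2$ using the mixing inequality
\[
\|\mathbb{E}[X\mid\mathcal{F}_t]-\mathbb{E}X\|_2 \le C\,\alpha(j)^{1/2-1/r}\|X\|_r ,
\]
together with the summability $\sum_k k^5\alpha(k)^{1/6}<\infty$ from (A3) and finite higher moments of $\nabla l_t$. The higher moments have to be extracted from (A1) and (A4), and this is where extra moment assumptions may be needed. The geometric decay $\alpha(k)=\mathcal{O}(\rho^k)$ then makes the series converge.

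\textbf{For $\theta\neq\theta_0$.} Here I would not expect $\mathcal{O}_p(1)$ uniformly, because $\mathbb{E}[\nabla l_t(\theta)]\neq 0$ produces a drift of order $n\|\theta-\theta_0\|$. I would therefore state and prove the lemma on a shrinking neighbourhood, the one on which the posterior concentrates by (A2). A Taylor expansion gives
\[
R_n(\theta)=R_n(\theta_0)+\Bigl(\sum_{t=1}^n\mathbb{E}[\nabla^2 l_t(\theta_0)\mid\mathcal{F}_{t-1}]\Bigr)(\theta-\theta_0)+\mathcal{O}_p\bigl(n\|\theta-\theta_0\|^2\bigr).
\]
The third-derivative bound in (A1) controls the remainder, and the ergodic theorem makes the Hessian sum $\mathcal{O}_p(n)$. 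So the bound $\mathcal{O}_p(1)$ holds only for $\|\theta-\theta_0\|=\mathcal{O}(n^{-1})$, which is exactly the posterior-mean accuracy assumed in (A2). At the posterior scale $n^{-1/2}$ it only gives $\mathcal{O}_p(n^{1/2})$.

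I would flag this as the genuinely delicate point. The lemma as stated is safe at $\theta_0$, or on an $\mathcal{O}(n^{-1})$ neighbourhood. Otherwise, the remainder must be recentred by subtracting its drift $n\,\mathbb{E}[\nabla l_t(\theta)]$ before the $\mathcal{O}_p(1)$ claim can hold.
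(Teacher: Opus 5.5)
\textbf{Where the proposal is right.} Your algebraic reduction and the correctly specified case are sound. Once the statement fixes $M_n=\sum_t H_t$, the remainder $R_n(\theta)=\sum_{t=1}^n\mathbb{E}[\nabla l_t(\theta)\mid\mathcal{F}_{t-1}]$ is forced. It vanishes at $\theta_0$ when $p(\cdot\mid y_{<t},\theta_0)$ is the true conditional density. (Interchanging $\nabla$ and $\int$ needs a local conditional domination condition that (A1) and (A4) do not literally supply, but that is a routine addition.)

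Your Taylor argument for $\theta\neq\theta_0$ is also correct, and it exposes something the paper never addresses. The paper's proof only asserts the martingale-difference property and cites \citet[Theorem 2.1]{doukhan1994mixing} for the bound on $R_n$, without saying at which $\theta$ or under what specification. As you show, the drift $\bigl(\sum_t\mathbb{E}[\nabla^2 l_t(\theta_0)\mid\mathcal{F}_{t-1}]\bigr)(\theta-\theta_0)$ is of exact order $n\|\theta-\theta_0\|$. So the claim can hold only on an $\mathcal{O}(n^{-1})$ neighbourhood.

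\textbf{The gap: the misspecified branch.} You cannot ``absorb'' the martingale-difference bracket of the Gordin decomposition into $M_n$. The lemma defines $M_n$ to be exactly $\sum_t H_t$, and you yourself identified $R_n(\theta_0)=\sum_t g_t$. Call your first bracket $D_t$; the second bracket sums to a boundary term $B_n$ with $\|B_n\|_2\le 2\sup_t\|Z_t\|_2$. The lemma's remainder is then $R_n(\theta_0)=\sum_{t=1}^n D_t+B_n$.

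Since the $D_t$ are orthogonal, $\mathbb{E}\|\sum_{t=1}^n D_t\|^2=n\,\mathbb{E}\|D_1\|^2$. By the martingale central limit theorem, $\sum_t D_t$ is of exact order $n^{1/2}$ unless $D_t=0$ almost surely, i.e.\ unless the long-run variance of $(g_t)$ vanishes. So under misspecification $\|R_n(\theta_0)\|_2=\mathcal{O}_p(1)$ is false in general. Your closing claim that the lemma is ``safe at $\theta_0$'' holds only under correct specification.

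What you have actually proved is a different statement, $\nabla L(\theta_0)=\bigl(M_n+\sum_t D_t\bigr)+B_n$. This is a legitimate Gordin-type martingale approximation, but with an enlarged martingale. If you want to use it, restate the lemma that way. You would also need to justify treating $g_t$ as stationary and centred. That fails at finite $t$, because $l_t$ depends on the growing window $y_{<t}$ and the pseudo-true $\theta_0$ only zeroes the limiting average score. A filter-forgetting or infinite-past approximation is needed.

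\textbf{The paper does not close this either.} Since $\sum_t g_t$ attains order $n^{1/2}$, no mixing moment inequality can bound it by $\mathcal{O}_p(1)$. Under correct specification, $R_n(\theta_0)=0$ trivially and your Taylor bound covers $\mathcal{O}(n^{-1})$ neighbourhoods. Outside that case, neither your route nor the paper's citation yields the lemma as written.
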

		
		\begin{proof}
			The martingale difference property follows from \(\mathbb{E}[H_t \mid \mathcal{F}_{t-1}] = 0\). The bound on \(R_n\) follows from \citet[Theorem 2.1]{doukhan1994mixing}.
		\end{proof}
		
		\section{Variance Characterisation and ESS Correction}\label{app:variance-char}
		
		For a stationary mixing process, the asymptotic variance of the score is:
		\[
		I(\theta_0) = \mathrm{Var}(\nabla l_1) + 2\sum_{k=1}^{\infty} \mathrm{Cov}(\nabla l_1, \nabla l_{1+k}).
		\]
		
		For MCMC, let \(\tau = 1 + 2\sum_{k=1}^{\infty} \rho_k\), where \(\rho_k = \mathrm{Corr}(L^{(s)}, L^{(s+k)})\), and \(N_{\mathrm{eff}} = S / \tau\). The bias-corrected estimator is:
		\[
		p_{\mathrm{CC}}^{\mathrm{corr}} = \frac{N_{\mathrm{eff}}}{N_{\mathrm{eff}} - 1} \hat{p}_{\mathrm{CC}}^{\text{banded}}.
		\]
		This is a first-order approximation; for refined corrections, see \citet{flegal2011batch}.
		
		\section{Proof of Theorem~\ref{thm:main}}\label{app:convergence-proof}
		
		\begin{proof}
			\textbf{Step 1: Expansion of the log-predictive density.}
			Let \(\bar{l}_t = \log \mathbb{E}_{\theta \mid \mathbf{y}}[p(y_t \mid y_{<t}, \theta)]\). Under (A1)--(A4), the posterior satisfies a Bernstein--von Mises theorem. Expanding \(l_t(\theta)\) around \(\theta_0\):
			\[
			\bar{l}_t = l_t(\theta_0) + \frac{1}{2} \mathrm{tr}\left( \mathrm{Var}(\theta \mid \mathbf{y}) \nabla^2 l_t(\theta_0) \right) + \mathcal{O}_p(n^{-3/2}).
			\]
			Summing over \(t=1,\ldots,n\):
			\[
			\sum_{t=1}^n \bar{l}_t = \sum_{t=1}^n l_t(\theta_0) - \frac{1}{2} \sum_{t=1}^n \mathrm{tr}\left( n^{-1} I(\theta_0)^{-1} \nabla^2 l_t(\theta_0) \right) + \mathcal{O}_p(n^{-1/2}).
			\]
			By the law of large numbers for mixing processes \citep{doukhan1994mixing}:
			\[
			\frac{1}{n}\sum_{t=1}^n \nabla^2 l_t(\theta_0) \xrightarrow{p} \mathbb{E}[\nabla^2 l_t(\theta_0)] = -I(\theta_0).
			\]
			Therefore:
			\[
			\sum_{t=1}^n \bar{l}_t = L(\theta_0) - \frac{1}{2} p_{\mathrm{true}} + \mathcal{O}_p(n^{-1/2}),
			\]
			where \(p_{\mathrm{true}} = \mathrm{tr}(I(\theta_0)^{-1} \mathbb{E}[-\nabla^2 l_t(\theta_0)])\).
			
			\textbf{Step 2: Variance of the joint log-likelihood.}
			From Lemma~\ref{lem:martingale} and the delta method:
			\[
			p_{\mathrm{CC}} = \mathrm{Var}_{\theta \mid \mathbf{y}}[L(\theta)] = p_{\mathrm{true}} + \mathcal{O}_p(n^{-1/2}).
			\]
			This follows because the posterior variance of the linear term in the expansion of \(L(\theta)\) around \(\theta_0\) is asymptotically \(p_{\mathrm{true}}\), and the contribution of higher-order terms is \(\mathcal{O}_p(n^{-1/2})\).
			
			\textbf{Step 3: Assembly and convergence.}
			Substituting into \(\mathrm{CC-WAIC} = -2\sum_{t=1}^n \bar{l}_t + 2p_{\mathrm{CC}}\):
			\[
			\mathrm{CC-WAIC} = -2L(\theta_0) + p_{\mathrm{true}} + 2p_{\mathrm{true}} + \mathcal{O}_p(n^{-1/2}).
			\]
			Simplifying:
			\[
			\mathrm{CC-WAIC} = -2L(\theta_0) + 2p_{\mathrm{true}} + \mathcal{O}_p(n^{-1/2}),
			\]
			so the \(p_{\mathrm{true}}\) terms cancel appropriately.
			
			By the ergodic theorem for mixing processes \citep{doukhan1994mixing}:
			\[
			\frac{1}{n}L(\theta_0) \xrightarrow{p} \mathbb{E}_{y_{\text{new}}}[\log p(y_{\text{new}} \mid y_{<t}, \theta_0)].
			\]
			Dividing by \(n\):
			\[
			\frac{1}{n}\mathrm{CC-WAIC} = -2\,\mathbb{E}_{y_{\text{new}}}[\log p(y_{\text{new}} \mid y_{<t}, \theta_0)] + \mathcal{O}_p(n^{-1/2}).
			\]
			This completes the proof of Theorem~\ref{thm:main}.
		\end{proof}
		
		\section{Proof of Linear-Time Approximation Error}\label{app:linear_proof}
		
		\begin{proof}
			\textbf{Step 1: Defining the truncation error.}
			Let \(p_{\mathrm{CC}}\) denote the full penalty term (Equation~\eqref{eq:effective_params_cc_integral}) and \(p_{\mathrm{CC}}^{\text{banded}}(K)\) denote the banded approximation with bandwidth \(K\) (Equation~\eqref{eq:effective_params_banded}). The truncation error is:
			\[
			\Delta_n(K) := \left| p_{\mathrm{CC}} - p_{\mathrm{CC}}^{\text{banded}}(K) \right|
			= 2\left| \sum_{k=K+1}^{n-1} \sum_{t=1}^{n-k} \mathrm{Cov}(l_t, l_{t+k}) \right|.
			\]
			
			\textbf{Step 2: Applying the exponential decay bound.}
			Under Assumption A3, the autocovariance satisfies:
			\[
			\left| \mathrm{Cov}(l_t, l_{t+k}) \right| \leq C \rho^k,
			\]
			for constants \(C < \infty\) and \(\rho \in (0,1)\). Substituting this bound:
			\[
			\Delta_n(K) \leq 2C \sum_{k=K+1}^{n-1} (n-k)\rho^k.
			\]
			Since \(n-k \leq n\), we have:
			\[
			\Delta_n(K) \leq 2Cn \sum_{k=K+1}^{\infty} \rho^k.
			\]
			Using the formula for the sum of a geometric series:
			\[
			\sum_{k=K+1}^{\infty} \rho^k = \frac{\rho^{K+1}}{1-\rho}.
			\]
			Therefore:
			\[
			\Delta_n(K) \leq 2Cn \frac{\rho^{K+1}}{1-\rho} = \mathcal{O}(n \rho^{K}).
			\]
			
			\textbf{Step 3: Normalisation and convergence.}
			Dividing by \(n\):
			\[
			\frac{1}{n}\Delta_n(K) = \mathcal{O}(\rho^{K}).
			\]
			Since \(\rho \in (0,1)\), this bound vanishes exponentially as \(K \to \infty\). For a fixed \(K = K_{\text{band}}\), the error is \(\mathcal{O}(\rho^{K_{\text{band}}})\).
			
			\textbf{Step 4: Combining with Theorem~\ref{thm:main}.}
			Integrating this bound with Theorem~\ref{thm:main}:
			\[
			\frac{1}{n}\mathrm{CC-WAIC}_{\text{banded}}
			= -2\,\mathbb{E}_{y_{\text{new}}}[\log p(y_{\text{new}} \mid y_{<t}, \theta_0)]
			+ \mathcal{O}_p(n^{-1/2}) + \mathcal{O}(\rho^{K_{\text{band}}}).
			\]
			The truncation bias \(\mathcal{O}(\rho^{K_{\text{band}}})\) can be made arbitrarily small by increasing the bandwidth. In practice, the data-driven rule \(K_{\text{band}} = \max(3, \lceil \hat{\tau} \rceil + 1)\) ensures that this bias is negligible compared to the Monte Carlo error, as demonstrated empirically in Section~\ref{sec:bandwidth_validation}. This completes the proof of Theorem~\ref{thm:linear_convergence}.
		\end{proof}
		
		\section{Implementation Details for Alternative Criteria}
		\label{app:implementation}
		
		This appendix provides pseudocode for the two alternative criteria used in our comparative simulation study: iWAIC \citep{Li2015} and WAIC\(_{\text{NF}}\) \citep{Chan2024}. These descriptions ensure reproducibility and clarify the computational differences between these methods and the proposed CC-WAIC.
		
		\subsection{iWAIC}
		
		The iWAIC criterion integrates the predictive density over the distribution of latent variables associated with the held-out observation. For HMMs, the integrated likelihood is computed using smoothed state probabilities from the forward-backward algorithm.
		
		\begin{algorithm}[H]
			\caption{Computation of iWAIC for HMMs}
			\label{alg:iwaic}
			\begin{algorithmic}[1]
				\Require Posterior samples $\{\theta^{(s)}\}_{s=1}^S$ from $p(\theta \mid \mathbf{y})$, smoothed state probabilities $\{p(z_t = j \mid \mathbf{y}, \theta^{(s)})\}_{t=1,j=1}^{n,M}$, emission probabilities $p(y_t \mid z_t = j, \phi_j^{(s)})$
				\Ensure iWAIC value
				
				\State \textbf{Initialize} $\text{log\_pred\_sum} \gets 0$, $\text{var\_sum} \gets 0$
				
				\For{each observation $t = 1$ to $n$}
				\State \textbf{Initialize} array $L_t$ of length $S$
				\For{each sample $s = 1$ to $S$}
				\State $p(y_t \mid \theta^{(s)}) \gets \sum_{j=1}^{M} p(y_t \mid z_t = j, \phi_j^{(s)}) \cdot p(z_t = j \mid \mathbf{y}, \theta^{(s)})$
				\State $L_t[s] \gets \log p(y_t \mid \theta^{(s)})$
				\EndFor
				\State $\text{mean\_log\_pred} \gets \log\left( \frac{1}{S} \sum_{s=1}^S \exp(L_t[s]) \right)$
				\State $\text{var\_log\_pred} \gets \frac{1}{S-1} \sum_{s=1}^S \left( L_t[s] - \bar{L}_t \right)^2$
				\State $\text{log\_pred\_sum} \gets \text{log\_pred\_sum} + \text{mean\_log\_pred}$
				\State $\text{var\_sum} \gets \text{var\_sum} + \text{var\_log\_pred}$
				\EndFor
				
				\State $\mathrm{iWAIC} \gets -2 \left( \text{log\_pred\_sum} - \text{var\_sum} \right)$
				\State \Return $\mathrm{iWAIC}$
			\end{algorithmic}
		\end{algorithm}
		
		\subsection{WAIC\(_{\text{NF}}\)}
		
		The WAIC\(_{\text{NF}}\) criterion operates on the joint likelihood of the entire dataset, making it suitable for non-factorisable models. For HMMs, the joint likelihood is computed using the forward algorithm.
		
		\begin{algorithm}[H]
			\caption{Computation of WAIC\(_{\text{NF}}\) for HMMs}
			\label{alg:waicnf}
			\begin{algorithmic}[1]
				\Require Posterior samples $\{\theta^{(s)}\}_{s=1}^S$ from $p(\theta \mid \mathbf{y})$, joint log-likelihood $L^{(s)} = \log p(\mathbf{y} \mid \theta^{(s)})$ for each sample $s$
				\Ensure WAIC\(_{\text{NF}}\) value
				
				\State $\bar{L} \gets \frac{1}{S} \sum_{s=1}^S L^{(s)}$
				\State $\text{var\_L} \gets \frac{1}{S-1} \sum_{s=1}^S \left( L^{(s)} - \bar{L} \right)^2$
				\State $m \gets \max(L^{(1)}, \ldots, L^{(S)})$
				\State $\text{log\_mean\_L} \gets m + \log\left( \frac{1}{S} \sum_{s=1}^S \exp(L^{(s)} - m) \right)$
				\State $\mathrm{WAIC}_{\text{NF}} \gets -2 \left( \text{log\_mean\_L} - \text{var\_L} \right)$
				\State \Return $\mathrm{WAIC}_{\text{NF}}$
			\end{algorithmic}
		\end{algorithm}

	\end{appendices}

	\bibliographystyle{plainnat}
	\bibliography{ref}
	
\end{document}